\title{Average-Case Integrality Gap for Non-Negative Principal Component Analysis}
\author[1]{Afonso S.\ Bandeira\thanks{Email: \textit{bandeira@math.ethz.ch}. Some of this work was done while with the Department of Mathematics at the Courant Institute of Mathematical Sciences, and the Center for Data Science, at New York University; and partially supported by NSF grants DMS-1712730 and DMS-1719545, and by a grant from the Sloan Foundation.}}
\author[2]{Dmitriy Kunisky\thanks{Email: \textit{kunisky@cims.nyu.edu}. Partially supported by NSF grants DMS-1712730 and DMS-1719545.}}
\author[2]{Alexander S.\ Wein\thanks{Email: \textit{awein@cims.nyu.edu}. Partially supported by NSF grant DMS-1712730 and by the Simons Collaboration on Algorithms and Geometry.}}
\date{December 3, 2020}
\affil[1]{Department of Mathematics, ETH Z\"{u}rich}
\affil[2]{Department of Mathematics, Courant Institute of Mathematical Sciences, NYU}
\definecolor{Orange}{RGB}{220,110,0}
\begin{document}

\maketitle

\abstract{
Montanari and Richard (2015) asked whether a natural semidefinite programming (SDP) relaxation can effectively optimize $\bx^{\top}\bW \bx$ over $\|\bx\| = 1$ with $x_i \geq 0$ for all coordinates $i$, where $\bW \in \RR^{n \times n}$ is drawn from the Gaussian orthogonal ensemble (GOE) or a spiked matrix model.
In small numerical experiments, this SDP appears to be \emph{tight} for the GOE, producing a rank-one optimal matrix solution aligned with the optimal vector $\bx$.
We prove, however, that as $n \to \infty$ the SDP is not tight, and certifies an upper bound asymptotically no better than the simple spectral bound $\lambda_{\max}(\bW)$ on this objective function.
We also provide evidence, using tools from recent literature on hypothesis testing with low-degree polynomials, that no subexponential-time certification algorithm can improve on this behavior.
Finally, we present further numerical experiments estimating how large $n$ would need to be before this limiting behavior becomes evident, providing a cautionary example against extrapolating asymptotics of SDPs in high dimension from their efficacy in small ``laptop scale'' computations.
}

\pagenumbering{gobble}

\newpage

\pagenumbering{arabic}

\noindent
\section{Introduction}

Recovering the most significant directions or \emph{principal components} of a matrix from noisy observations is a fundamental problem in both mathematical statistics and applications \cite{Rao-1964-PCA,Johnstone-2001-LargestEigenvaluePCA,Ringner-2008-PCA,AW-2010-PCA}.
The theoretical asymptotics of this task have been studied at length by analyzing idealized \emph{spiked matrix models}.
While the first such models, proposed by Johnstone \cite{Johnstone-2001-LargestEigenvaluePCA}, concerned Gaussian observations from a covariance matrix deformed by adding a rank-one ``spike,'' the following simpler \emph{Wigner spiked matrix model} captures much of the same phenomenology.
We consider the following two probability distributions $\PP$ and $\QQ$ over $n \times n$ symmetric matrices.
\begin{itemize}
    \item Under $\QQ = \QQ_n$, observe $\bW \in \RR^{n \times n}_{\sym}$ drawn from the \emph{Gaussian orthogonal ensemble (GOE)}, meaning $W_{ii} \sim \sN(0, \frac{1}{n})$ and $W_{ij} = W_{ji} \sim \sN(0, \frac{2}{n})$ for $i < j$ with all $\frac{n(n + 1)}{2}$ of these entries distributed independently. We also write $\QQ = \GOE(n)$ for this distribution.
    \item Under $\PP = \PP_n$, first draw $\bu \sim \Unif(\mathbb{S}^{n - 1})$ for $\mathbb{S}^{n - 1} \subset \RR^n$ the sphere of unit radius, and then observe $\bW = \bW_0 + \beta \bu\bu^{\top}$ for $\bW_0 \sim \GOE(n)$ and some fixed $\beta > 0$, held constant as $n \to \infty$.
\end{itemize}

Two natural statistical questions arise: (1) \emph{detection} or \emph{testing}, where we observe $\bW$ drawn from $\PP$ or $\QQ$ and must decide which distribution $\bW$ was drawn from, and (2) \emph{recovery} or \emph{estimation}, where we observe $\bW \sim \PP$ and seek to produce a good estimate of $\bu$.
In either case, the associated optimization problem of computing the largest eigenvalue is natural to consider:
\begin{equation}
    \label{eq:lambda-max}
    \lambda_{\max}(\bW) \colonequals \max_{\|\bx\|_2 = 1} \bx^{\top} \bW \bx.
\end{equation}
For recovery, computing the maximizer $\bx^{\star}$ (the top eigenvector of $\bW$) performs maximum likelihood estimation of $\bu$.
For detection, computing and thresholding $\lambda_{\max}(\bW)$ itself is a natural and often effective strategy.
The optimal value and optimizer of $\lambda_{\max}(\bW)$ can be approximated (to arbitrary accuracy) in time $\poly(n)$, so these correspond to efficient algorithms for recovery and detection, respectively.
Moreover, these algorithms are essentially \emph{optimal}: almost whenever\footnote{With the exception of the critical case $\beta = 1$, where thresholding $\lambda_{\max}(\bW)$ does not distinguish $\PP$ from $\QQ$, but it is possible to do so by considering more sophisticated statistics \cite{OJ-2020-TestingSpikedModels}.} it is possible to distinguish $\PP$ from $\QQ$ with high probability,\footnote{We say that a sequence of events $A_n$ occurs \emph{with high probability} under a sequence of probability measures $\PP_n$ if $\lim_{n \to \infty} \PP_n[A_n] = 1$.} thresholding $\lambda_{\max}(\bW)$ achieves this; whenever it is possible to estimate $\bu$ non-trivially then the optimizer $\bx^{\star}$ achieves this.
\begin{proposition}[\cite{FP-2007-LargestEigenvalueWigner,CDMF-2009-DeformedWigner,MRZ-2015-LimitationsSpectral,BMVVX-2018-InfoTheoretic}]
    Let $\bW \sim \PP$ with parameter $\beta \ge 0$ (note that taking $\beta = 0$ gives $\PP = \QQ$), and let $\bu$ be the spike vector.
    Let $\bx^{\star}(\bW)$ be the optimizer of $\lambda_{\max}(\bW)$, the top eigenvector of $\bW$ scaled to have unit norm.
    Then, we have almost surely
    \begin{align}
      \lim_{n \to \infty} \lambda_{\max}(\bW) &= \left\{\begin{array}{ll} 2 & \text{if } 0 \leq \beta \leq 1, \\ \beta + \beta^{-1} > 2 \, \, \, \, \, \, & \text{if } \beta > 1, \end{array} \right. \\
      \lim_{n \to \infty} |\la \bx^{\star}(\bW), \bu \ra| &= \left\{\begin{array}{ll} 0 & \text{if } 0 \leq \beta \leq 1, \\ \sqrt{1 - \beta^{-2}} > 0 & \text{if } \beta > 1. \end{array}\right.
    \end{align}
    Moreover, if $0 \le \beta < 1$ then there is no function of $\bW$ that with high probability selects correctly whether $\bW$ is drawn from $\PP$ or $\QQ$, and if $0 \le \beta \leq 1$ then there there is no unit vector-valued function of $\bW$ that has inner product with $\bx$ asymptotically bounded away from zero with high probability when $\bW \sim \PP$.
\end{proposition}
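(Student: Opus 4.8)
This proposition collects facts established across the cited works, so the plan is to assemble them rather than reprove everything from first principles; it splits into the two spectral statements (the displays) and the two information-theoretic impossibility statements.

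\emph{Spectral statistics.} Write $\bW = \bW_0 + \beta\bu\bu^{\top}$ with $\bW_0 \sim \GOE(n)$ and $\bu$ uniform on the sphere and independent of $\bW_0$. First I would invoke the semicircle law: the empirical spectral distribution of $\bW_0$ converges to $\mu_{\mathrm{sc}}$ on $[-2,2]$ and $\lambda_{\max}(\bW_0) \to 2$ almost surely. Next comes the secular equation: any eigenvalue $\lambda$ of $\bW$ strictly exceeding $\lambda_{\max}(\bW_0)$ satisfies $1 = \beta\,\bu^{\top}(\lambda I - \bW_0)^{-1}\bu$, and since $\bu$ is independent of $\bW_0$ a concentration estimate for quadratic forms replaces the right-hand side by $\beta\, G(\lambda)$, where $G(\lambda) = \int(\lambda - x)^{-1}\,d\mu_{\mathrm{sc}}(x) = \tfrac{1}{2}(\lambda - \sqrt{\lambda^2-4})$ for $\lambda > 2$. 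As $G$ decreases strictly from $G(2) = 1$ to $0$, the equation $1 = \beta G(\lambda)$ has a solution $\lambda > 2$ precisely when $\beta > 1$, in which case the perturbation pushes $\lambda_{\max}(\bW)$ out to $\lambda^{\star} = \beta + \beta^{-1}$; for $\beta \le 1$ interlacing (at most one eigenvalue of $\bW$ exceeds $\lambda_{\max}(\bW_0)$, and $\lambda_{\max}(\bW) \ge \lambda_{\max}(\bW_0)$) pins $\lambda_{\max}(\bW) \to 2$. For the eigenvector, from $(\lambda^{\star} I - \bW_0)\bx^{\star} = \beta\la\bu,\bx^{\star}\ra\bu$ I take squared norms: $1 = \beta^2\la\bu,\bx^{\star}\ra^2\,\bu^{\top}(\lambda^{\star} I - \bW_0)^{-2}\bu \to \beta^2\la\bu,\bx^{\star}\ra^2\cdot(-G'(\lambda^{\star}))$, and since $-G'(\beta + \beta^{-1}) = (\beta^2 - 1)^{-1}$ this gives $\la\bu,\bx^{\star}\ra^2 \to 1 - \beta^{-2}$ when $\beta > 1$. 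For $\beta \le 1$ there is no outlier, $\bx^{\star}$ is an edge eigenvector, and its isotropic delocalization together with the independence of $\bu$ forces $\la\bu,\bx^{\star}\ra \to 0$.

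\emph{Impossibility of detection for $\beta < 1$.} The plan here is the second-moment method. Using the Gaussian density of the GOE, $\frac{d\PP}{d\QQ}(\bW) = \mathbb{E}_{\bu}\exp(\tfrac{n\beta}{2}\bu^{\top}\bW\bu - \tfrac{n\beta^2}{4})$, and integrating out $\bW$ via a Gaussian moment-generating-function computation yields, up to normalization conventions, $\mathbb{E}_{\bW\sim\QQ}[(\tfrac{d\PP}{d\QQ})^2] = \mathbb{E}_{\bu,\bu'}\exp(\tfrac{n\beta^2}{2}\la\bu,\bu'\ra^2)$ for independent uniform $\bu,\bu'$. Since $\sqrt{n}\la\bu,\bu'\ra$ is asymptotically standard Gaussian, this expectation converges to $(1-\beta^2)^{-1/2} < \infty$ exactly when $\beta < 1$. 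A bounded second moment gives $\PP \triangleleft \QQ$ by Le Cam's first lemma, and contiguity already rules out any test that selects the correct distribution with high probability: a test with vanishing type-I error would then have type-II error tending to $1$.

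\emph{Impossibility of recovery for $\beta \le 1$, and the main obstacle.} Recovery does not reduce to detection: the joint law of $(\bu,\bW)$ under $\PP$ is not contiguous to the product null (observing $\bu$ renders the planted structure trivially detectable, and the corresponding second moment diverges), and at $\beta = 1$ detection of $\PP$ versus $\QQ$ is itself possible. Instead I would invoke the sharp minimum-mean-squared-error / mutual-information analysis of the rank-one Wigner model \cite{BMVVX-2018-InfoTheoretic}: for $\beta \le 1$ the posterior distribution of $\bu$ given $\bW$ carries no non-trivial directional information -- concretely $\mathbb{E}_{\PP}\la\bu,\tilde{\bu}\ra^2 \to 0$ for an independent posterior sample $\tilde{\bu}$, equivalently the Bayes-optimal overlap vanishes -- and since no estimator can exceed the Bayes-optimal overlap, no unit vector-valued function of $\bW$ has inner product with $\bu$ bounded away from zero with high probability. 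The genuinely delicate point -- where I expect the real work to lie -- is the critical value $\beta = 1$: the top eigenvalue no longer separates from the bulk, the naive second moment diverges, and establishing the eigenvector delocalization and the recovery lower bound there requires finer tools (local semicircle laws, conditioning or truncation of the likelihood ratio, or the interpolation method), all developed in the cited literature.
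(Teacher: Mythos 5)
The paper does not prove this proposition: immediately after stating it, the authors give only a one-paragraph attribution, pointing to \cite{FP-2007-LargestEigenvalueWigner} for the limit of $\lambda_{\max}(\bW)$, to \cite{CDMF-2009-DeformedWigner} for the eigenvector overlap, to \cite{MRZ-2015-LimitationsSpectral} for contiguity (hence detection impossibility), and to \cite{BMVVX-2018-InfoTheoretic} for the step from contiguity to recovery impossibility. So there is no in-paper proof to compare against, only a citation map, and your proposal is best read as a reconstruction of the arguments in those references. On the spectral side your reconstruction is correct and is the standard one: the secular equation $1 = \beta\,\bu^{\top}(\lambda I - \bW_0)^{-1}\bu$, concentration of the quadratic form to $\beta G(\lambda)$ with $G$ the Stieltjes transform of the semicircle, the threshold at $\beta = 1$ where $G(2) = 1$, the solution $\lambda^{\star} = \beta + \beta^{-1}$, and the overlap $1 - \beta^{-2}$ via $-G'(\lambda^{\star}) = (\beta^2-1)^{-1}$ all check out. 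Your second-moment computation $\EE_\QQ[(d\PP/d\QQ)^2] = \EE_{\bu,\bu'}\exp(\tfrac{n\beta^2}{2}\langle\bu,\bu'\rangle^2)$, convergent precisely for $\beta < 1$, is also correct and is indeed the route taken in \cite{MRZ-2015-LimitationsSpectral}.

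The one place you depart from the paper's framing is the recovery impossibility step. The paper writes that \cite{BMVVX-2018-InfoTheoretic} shows ``contiguity implies the impossibility of estimating $\bu$,'' while you argue that recovery does not reduce to detection because the joint law of $(\bu,\bW)$ under $\PP$ is not contiguous to the product null. Your observation is correct as far as it goes, but it is attacking a straw version of the reduction: the argument in \cite{BMVVX-2018-InfoTheoretic} does not proceed via joint contiguity. It leverages the same bounded $\chi^2$-divergence quantity underlying marginal contiguity (equivalently, that the mutual information $I(\bu;\bW)$ is $o(n)$) to bound the Bayes-optimal overlap, which is close in spirit to the MMSE/mutual-information route you describe as your alternative. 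So the two routes you present as competing are really the same family of arguments, and the paper's one-line attribution is a compressed statement of that. Your identification of the endpoint $\beta = 1$ as the genuinely delicate case is right and worth retaining: there the second moment diverges and marginal contiguity fails, so one needs a conditional second-moment or interpolation argument, and the paper's prose does not flag this.
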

\noindent
More specifically, the behavior of $\lambda_{\max}(\bW)$ is established by \cite{FP-2007-LargestEigenvalueWigner}, while the behavior of $|\langle \bx^{\star}(\bW), \bu \rangle|$ is determined by \cite{CDMF-2009-DeformedWigner} (both building on the seminal results of \cite{BBAP-2005-LargestEigenvalueSampleCovariance}).
The impossibility of ``detection'' or of selecting whether $\bW$ is drawn from $\PP$ or $\QQ$ with high probability is shown by \cite{MRZ-2015-LimitationsSpectral} by establishing \emph{contiguity} of these two sequences of probability measures.
Finally, \cite{BMVVX-2018-InfoTheoretic} show that this contiguity implies the impossibility of estimating $\bu$ with positive correlation.

More refined models follow from choosing more structured distributions of $\bu$.
This corresponds to extracting principal components under some prior knowledge of their structure.
One natural example was introduced by \cite{MR-2015-NonNegative}, where $\bu$ is chosen uniformly from the positive orthant of $\mathbb{S}^{n - 1}$ instead of the entire sphere, which yields the problem of \emph{non-negative PCA}.
Here, the null model $\QQ$ remains as above, while $\PP$ is replaced with $\PP^+$ defined as follows:
\begin{itemize}
    \item Under $\PP^+$, first draw $\bv \sim \Unif(\mathbb{S}^{n - 1})$, let $\bu$ have entries $u_i = |v_i|$, and then observe $\bW = \bW_0 + \beta \bu\bu^{\top}$ for $\bW_0 \sim \GOE(n)$ and some fixed $\beta > 0$.
\end{itemize}
Following the case of classical PCA, we might hope to attack detection and recovery by solving the optimization problem
\begin{equation}
    \label{eq:lambda-plus}
    \lambda^{+}(\bW) \colonequals \max_{\substack{\|\bx\|_2 = 1 \\ \bx \geq 0}} \bx^\top \bW \bx,
\end{equation}
where $\bx \geq 0$ means $x_i \geq 0$ for each $i \in [n]$.
Here, however, a crucial difference between classical PCA and non-negative PCA arises: unlike $\lambda_{\max}(\bW)$ and the associated optimizer, it is $\mathsf{NP}$-hard to compute $\lambda^+(\bW)$ for general $\bW$ \cite{KP-2002-CopositiveProgramming}.
Therefore, non-negative PCA poses a more substantial algorithmic challenge.

Nonetheless, using an approximate message-passing (AMP) algorithm \cite{MR-2015-NonNegative} showed that it is possible to solve this problem essentially to optimality for random inputs from $\QQ$ or $\PP^+$.
We focus now just on the ``null'' case $\bW \sim \QQ = \GOE(n)$.
\begin{proposition}[Part of Theorem 2 and Proposition 4.5 of \cite{MR-2015-NonNegative}]
    \label{prop:true-value}
    Almost surely for $\bW \sim \GOE(n)$,
    \begin{equation}
        \lim_{n \to \infty} \lambda^+(\bW) = \sqrt{2}.
    \end{equation}
    Moreover, for any $\varepsilon > 0$, there exists an algorithm that runs in time $\poly(n)$ (with runtime also depending on $\varepsilon$) and computes $\bx \in \mathbb{S}^{n - 1}$ with $\bx \geq 0$ that has $\bx^{\top}\bW \bx \geq \sqrt{2} - \varepsilon$ with high probability.
\end{proposition}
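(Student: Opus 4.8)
\emph{Proof idea.} The plan is to establish $\lambda^+(\bW)\to\sqrt 2$ in three movements: a concentration reduction to the expectation, a lower bound realized by an explicit polynomial-time algorithm, and a matching upper bound. For the reduction, note first that $\bW\mapsto\lambda^+(\bW)$ is $1$-Lipschitz in the Frobenius norm, since $\bx^{\top}\bW\bx=\langle\bW,\bx\bx^{\top}\rangle$ and $\|\bx\bx^{\top}\|_F=\|\bx\|_2^2=1$ on the feasible set, so $\lambda^+$ is a supremum of linear functionals of unit Frobenius norm. As $\bW$ has independent Gaussian entries of variance $O(1/n)$, Gaussian concentration gives $\PP[\,|\lambda^+(\bW)-\mathbb{E}\lambda^+(\bW)|\ge t\,]\le 2\exp(-cnt^2)$ for a universal $c>0$. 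By Borel--Cantelli it then suffices to show $\mathbb{E}\lambda^+(\bW)\to\sqrt 2$, and the ``with high probability'' algorithmic guarantee is obtained directly at each fixed $n$.

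\textbf{Lower bound and the algorithm.} For the lower bound I would run an approximate message passing (AMP) iteration matched to \eqref{eq:lambda-plus}, of the form $\mathbf{z}^{t+1}=\bW f_t(\mathbf{z}^t)-b_t f_{t-1}(\mathbf{z}^{t-1})$, where each $f_t$ is a rectifying nonlinearity --- essentially $x\mapsto\max(x,0)$, possibly with a state-dependent shift, the natural ``projection onto the non-negative orthant'' for this problem --- and $b_t$ is the prescribed Onsager coefficient. Since $\bW\sim\GOE(n)$ carries no planted component, the state evolution is a deterministic recursion describing the asymptotically Gaussian joint law of the coordinates of the $\mathbf{z}^t$'s; from it one reads off the objective value of the normalized feasible iterate $\bx^{(t)}=f_t(\mathbf{z}^t)/\|f_t(\mathbf{z}^t)\|_2\ge 0$, namely $(\bx^{(t)})^{\top}\bW\bx^{(t)}=\big[\langle f_t(\mathbf{z}^t),\mathbf{z}^{t+1}\rangle+b_t\langle f_t(\mathbf{z}^t),f_{t-1}(\mathbf{z}^{t-1})\rangle\big]/\|f_t(\mathbf{z}^t)\|_2^2$. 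The key computation is to solve the state-evolution fixed-point equations --- optimizing over the choice of the $f_t$'s, which is what makes $\sqrt 2$ rather than a smaller value attainable --- and to verify that this quantity converges to $\sqrt 2$ as $t\to\infty$. Running $O_\varepsilon(1)$ iterations is then a $\poly(n)$ algorithm producing, with high probability, a non-negative unit vector of objective value at least $\sqrt 2-\varepsilon$; in particular $\liminf_n\mathbb{E}\lambda^+(\bW)\ge\sqrt 2$.

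\textbf{Matching upper bound.} It remains to prove $\limsup_n\mathbb{E}\lambda^+(\bW)\le\sqrt 2$. I would first record what is \emph{not} enough: an $\varepsilon$-net over the non-negative part of the sphere together with a union bound, or a Sudakov--Fernique comparison of $\bx\mapsto\bx^{\top}\bW\bx$ against a linear Gaussian process, pin down $\lambda^+$ only up to a universal constant; and a first-moment count over the candidate supports $S=\mathrm{supp}(\bx^{\star})$ of the maximizer --- using that, by the KKT conditions, $\bx^{\star}_S$ must be the \emph{top} eigenvector of the principal submatrix $\bW_{S,S}$ and must be entrywise positive there --- still yields only a bound that is bounded away from $\sqrt 2$. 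The sharp constant instead calls for a Guerra-type Gaussian interpolation for the ground-state energy of the pure $2$-spin spherical model $\max_{\bx\ge0,\,\|\bx\|_2=1}\bx^{\top}\bW\bx$ conditioned to the non-negative orthant: interpolate between $\bW$ and a decoupled Gaussian field indexed by a single overlap parameter, differentiate in the interpolation time, and control the sign of the derivative using that feasible $\bx,\by$ obey $\langle\bx,\by\rangle\in[0,1]$; optimizing the resulting one-parameter (replica-symmetric) variational formula, which should coincide with the AMP fixed point above, gives $\mathbb{E}\lambda^+(\bW)\le\sqrt 2+o(1)$. Combined with the lower bound this gives $\mathbb{E}\lambda^+(\bW)\to\sqrt 2$, and the almost-sure statement follows from the concentration in the first step.

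\textbf{Main obstacle.} The hard part is the upper bound: pinning the constant to exactly $\sqrt 2$, i.e.\ showing the replica-symmetric ansatz is tight for this constrained spherical model and that the interpolation bound matches the value AMP achieves --- in particular, that the constant-factor loss of the SDP-flavored or chaining arguments can be avoided. The concentration step is routine and the AMP lower bound follows the standard state-evolution playbook; it is the coincidence of the two bounds at $\sqrt 2$ that carries the content.
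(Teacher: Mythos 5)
This proposition is not proved in the paper: it is imported verbatim (as the bracketed attribution indicates) from Theorem 2 and Proposition 4.5 of Montanari and Richard~\cite{MR-2015-NonNegative}, and the present paper uses it as a black box. So there is no in-paper proof to compare your proposal against; what follows is an assessment of the proposal on its own terms and against the cited source.

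Your outline tracks the structure of the MR argument closely. The concentration reduction is correct and routine: $\lambda^+$ is $1$-Lipschitz in Frobenius norm, and with the $\Theta(1/n)$-variance entries of $\GOE(n)$ this gives the $\exp(-cnt^2)$ tail, so it suffices to control $\EE\lambda^+(\bW)$, with the high-probability algorithmic guarantee handled separately. The lower bound via an AMP iteration with rectifying denoisers analyzed through state evolution is precisely MR's Proposition~4.5 and also supplies the $\poly(n)$ algorithm, as you note. You have also correctly located the hard part: the matching upper bound $\limsup_n \EE\lambda^+ \le \sqrt{2}$, for which $\varepsilon$-nets, a naive Sudakov--Fernique comparison, and a first-moment count over supports all lose constants. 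The interpolation-based route you propose is the right kind of argument, and your observation that feasible $\bx,\by$ satisfy $\langle\bx,\by\rangle\in[0,1]$ is exactly the structural fact that makes a one-parameter (replica-symmetric) bound plausible here. The one thing to be careful about if you fleshed this out: the non-negative orthant intersected with the sphere is not a product set or a sphere, so the Guerra-type derivative computation cannot simply be imported from the unconstrained spherical $2$-spin model and must be rechecked for this constraint set, and the positive-temperature bound must be passed to the ground-state limit. You flag this, and the coincidence of the resulting RS variational value with the AMP fixed point at $\sqrt{2}$, as the real content, which is the right assessment.
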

\noindent
This same algorithm is also effective for detection between $\QQ$ and $\PP^+$ and recovery under $\PP^+$.

\begin{remark}
    One may check that various simpler algorithms do not produce a solution of the same quality.
    For example, if $\bv$ is the top eigenvector of $\bW$, then one simple algorithm is to take $\bv^+$ having entries $v^+_i = \max(0, v_i)$ and return $\bx = \bv^+ / \|\bv^+\|$.
    However, computing heuristically, we have
    \begin{equation}
        \bx^{\top}\bW \bx \approx \frac{1}{\|\bv^+\|^2} \cdot 2 \langle \bv, \bv^+ \rangle^2 \approx 2 \cdot 2 \cdot \left(\frac{1}{2}\right)^2 = 1,
    \end{equation}
    whereby this choice of $\bx$ is inferior to that produced by AMP.
\end{remark}

In this paper, we study an alternative to AMP, also suggested in \cite{MR-2015-NonNegative}, where we substitute for the intractable optimization problem $\lambda^+(\bW)$ the following tractable convex relaxation, a natural \emph{semidefinite program (SDP)}:
\begin{equation}\label{eq:SDP}
    \SDP(\bW) \colonequals \max_{\substack{\bX \succeq 0 \\ \bX \geq 0 \\ \Tr(\bX) = 1}} \la \bX, \bW \ra \geq \lambda^+(\bW).
\end{equation}
In \cite{MR-2015-NonNegative}, numerical experiments are presented that suggest that this SDP is effective in recovering $\bx$ under $\PP^+$, the restriction of the top eigenvector of the optimizer $\bX^{\star}$ to only positive entries giving a comparable estimate of $\bu$ to the AMP algorithm (see their Section 5.3).
In Section~\ref{sec:finite-size}, we present analogous experiments for $\bW \sim \GOE(n)$, and note that in this case the SDP is often \emph{tight}, the optimizer $\bX^{\star}$ being rank one within numerical tolerances.
While the SDP is much slower than AMP, its apparent efficacy is nevertheless tantalizing, suggesting that the algorithmic tractability of non-negative PCA might be unified with other situations where SDP relaxations of maximum likelihood estimation are tight \cite{BKS-2014-TightnessMLESDP}. Furthermore, the SDP offers some advantages over AMP: it algorithmically \emph{proves} (or \emph{certifies}) an upper bound on the value of $\lambda^+(\bW)$, and it may also exhibit robustness properties that SDPs have been shown to enjoy in other settings~\cite{FK-2001-SemirandomGraphProblems,MPW-2016-RobustReconstruction,RTJM-2016-SDPRobustness}.
The problem of determining the asymptotic behavior of the SDP as $n \to \infty$ under $\bW \sim \GOE(n)$ was also posed explicitly as Open Problem~9.5 in the lecture notes \cite{Bandeira-2015-TenLectures} of one of the authors.

We therefore take up the following two questions concerning this SDP and related algorithmic approaches when $\bW \sim \GOE(n)$.
The first concerns this specific semidefinite program:
\begin{enumerate}
    \item[1.] When $\bW \sim \GOE(n)$, does $\SDP(\bW) \to \sqrt{2}$ in probability as $n \to \infty$?
\end{enumerate}
As we will see, the answer in the limit $n \to \infty$, in surprising contrast to the experiments of~\cite{MR-2015-NonNegative} for small $n$, is \emph{no}.
In fact, we instead have $\SDP(\bW) \to \lambda_{\max}(\bW) \approx 2$ as $n \to \infty$.
We then ask whether any remotely efficient algorithm that \emph{certifies} upper bounds on $\lambda^+(\bW)$ can improve upon this.
\begin{enumerate}
    \item[2.] Does there exist an algorithm that runs in time $\exp(O(n^{1 - \eta}))$ for some fixed $\eta > 0$ and computes $c: \RR^{n \times n}_{\sym} \to \RR$ with the following two properties?
    \begin{itemize}
        \item For \emph{all} $\bW \in \RR^{n \times n}_{\sym}$, we have $c(\bW) \geq \lambda^+(\bW)$.
        \item When $\bW \sim \GOE(n)$, we have $c(\bW) \leq 2 - \varepsilon$ with high probability for some fixed $\varepsilon > 0$.
    \end{itemize}
\end{enumerate}
We provide rigorous evidence, based on the \emph{low-degree polynomial method}, that even the answer to this much broader question again is \emph{no}.

\subsection{Organization}

The remainder of the paper is organized as follows.
In Section~\ref{sec:sdp}, we state and prove a lower bound on $\SDP(\bW)$ when $\bW \sim \GOE(n)$.
In Section~\ref{sec:low-deg}, we state and prove a \emph{reduction} from a certain hypothesis testing problem to the problem of certifying bounds on $\lambda^+(\bW)$, and review evidence from prior work that this hypothesis testing problem is computationally hard.
Finally, in Section~\ref{sec:finite-size}, we present the results of larger numerical experiments that capture the departure from the ``tight regime'' where the optimizer of $\SDP$ has rank one, a striking example of the difference between theoretical asymptotics and computations tractable at ``laptop scale'' for semidefinite programming.

\section{Lower Bound on Semidefinite Programming}
\label{sec:sdp}

In this section we will prove the following result, which gives the asymptotic value of $\SDP(\bW)$.
\begin{theorem}
    \label{thm:sdp}
    For any $\varepsilon > 0$, $\lim_{n \to \infty} \PP[2 - \epsilon \leq \SDP(\bW) \leq 2 + \epsilon] = 1$ where $\bW \sim \GOE(n)$.
\end{theorem}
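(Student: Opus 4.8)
The plan is to prove the two inequalities separately. The upper bound is immediate: dropping the entrywise constraint $\bX \geq 0$ gives
\begin{equation*}
    \SDP(\bW) \;\leq\; \max\{\langle \bX, \bW\rangle : \bX \succeq 0,\ \Tr(\bX) = 1\} \;=\; \lambda_{\max}(\bW),
\end{equation*}
and $\lambda_{\max}(\bW) \to 2$ almost surely for $\bW \sim \GOE(n)$ (the first Proposition above, with $\beta = 0$), so $\SDP(\bW) \leq 2 + \varepsilon$ with high probability. All the content is therefore in the lower bound: I must exhibit, with high probability, a feasible $\bX$ with $\langle \bX, \bW\rangle \geq 2 - \varepsilon$. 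Note first that such an $\bX$ cannot be \emph{completely positive}, i.e.\ of the form $\sum_k \bx_k \bx_k^\top$ with each $\bx_k \geq 0$: such an $\bX$ would satisfy $\langle \bX, \bW\rangle = \sum_k \bx_k^\top \bW \bx_k \leq \lambda^+(\bW) \sum_k \|\bx_k\|^2 = \lambda^+(\bW)\, \Tr(\bX) \to \sqrt{2}$. So the near-optimal matrix must be positive semidefinite and entrywise nonnegative without being a nonnegative combination of nonnegative rank-one terms; indeed the $\bX$ I build will have rank $\Theta(n)$, which is itself the quantitative manifestation of the SDP failing to be tight.

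Fix $\delta \colonequals \varepsilon / 2$ and let $\Pi$ be the orthogonal projection onto the span of the eigenvectors of $\bW$ with eigenvalue at least $2 - \delta$, and write $m \colonequals \Tr(\Pi)$ for the number of such eigenvalues. By convergence of the empirical spectral distribution of $\bW$ to the semicircle law, together with standard concentration of this eigenvalue count, $m = (c_\delta + o(1))\, n$ with high probability for some constant $c_\delta > 0$. The matrix $\Pi$ is positive semidefinite, has diagonal entries $\Pi_{ii} \approx m/n$, and satisfies $\langle \Pi, \bW\rangle = \sum_{\lambda_k \geq 2 - \delta} \lambda_k \geq (2 - \delta) m$, hence $\langle \Pi, \bW\rangle / \Tr(\Pi) \geq 2 - \delta$; the obstruction to feasibility is only that $\Pi$ has small negative off-diagonal entries. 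I would remove this with a small entrywise-positive rank-one correction: set
\begin{equation*}
    \bX_0 \colonequals \Pi + \tau_n\, \mathbf{1}\mathbf{1}^\top, \qquad \bX \colonequals \bX_0 / \Tr(\bX_0),
\end{equation*}
for a scalar $\tau_n > 0$ chosen below. Then $\bX_0 \succeq 0$ as a sum of positive semidefinite matrices, $\Tr(\bX) = 1$ by construction, and $\bX$ is entrywise nonnegative as soon as $\tau_n \geq \max_{i \neq j} |\Pi_{ij}|$. Provided such a $\tau_n$ can be taken with $\tau_n n = o(m)$, the correction is negligible: since $\mathbf{1}^\top \bW \mathbf{1}$ is Gaussian with variance $\Theta(n)$, hence $O(\sqrt{n \log n})$ with high probability,
\begin{equation*}
    \langle \bX, \bW\rangle \;=\; \frac{\langle \Pi, \bW\rangle + \tau_n\, \mathbf{1}^\top \bW \mathbf{1}}{m + \tau_n n} \;\geq\; \frac{(2 - \delta) m - o(m)}{m + o(m)} \;=\; 2 - \delta - o(1) \;\geq\; 2 - \varepsilon
\end{equation*}
with high probability, which is the lower bound.

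The one substantive ingredient --- and the step I expect to be the main obstacle --- is showing that the off-diagonal entries of the spectral projection are asymptotically much smaller than the diagonal ones, namely $\max_{i \neq j} |\Pi_{ij}| = o(m/n)$ with high probability (so that one may take $\tau_n$ of order $\sqrt{m \log n}\,/\,n$, which then satisfies $\tau_n n = o(m)$ since $m = \Theta(n)$). I would prove this using the rotational invariance of the GOE: conditionally on the eigenvalues, the eigenvector matrix $U$ is Haar-distributed on the orthogonal group, so $\Pi_{ij} = \sum_{k :\, \lambda_k \geq 2 - \delta} U_{ik} U_{jk}$, which for $i \neq j$ has mean zero and standard deviation of order $\sqrt{m}/n$; a concentration bound for such bilinear forms in the entries of a Haar-orthogonal matrix (or a comparison with Gaussian entries), together with a union bound over the $\binom{n}{2}$ pairs, then yields the claim because $\sqrt{m \log n} \ll m$ when $m = \Theta(n)$. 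Alternatively one can invoke an isotropic local semicircle law to control $g(\bW)_{ij}$ for a smooth $g$ approximating $\mathbf{1}[\lambda \geq 2 - \delta]$; and one could equally replace $\Pi$ by $\exp(\theta \bW)$ for a large constant $\theta = \theta(\varepsilon)$, whose normalized objective value tends to $2 - \Theta(1/\theta)$ and whose off-diagonal entries may be slightly more convenient to estimate analytically. Granting this estimate, combining it with the concentration of $m$, the bound on $\mathbf{1}^\top \bW \mathbf{1}$, and $\lambda_{\max}(\bW) \to 2$ completes the proof of both inequalities.
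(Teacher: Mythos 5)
Your proposal is correct and follows essentially the same route as the paper: both construct a feasible witness by mixing the spectral projector onto the top $\Theta(n)$ eigenvectors of $\bW$ with the all-ones rank-one matrix, and both invoke concentration of the projector's off-diagonal entries (via the Haar invariance of the GOE eigenbasis) to ensure entrywise nonnegativity. The differences are cosmetic: you take a shrinking mixing weight $\tau_n$ whereas the paper fixes a small constant $\alpha$, and you project onto eigenvalues at least $2-\delta$ whereas the paper projects onto the top $\delta n$ of them; the substantive ingredients (the projector-plus-rank-one construction and the $O(\sqrt{\log n / n})$ bound on the projector's off-diagonal entries) are identical.
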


The main technical tool required will be the following concentration inequality for the entries of a random projection matrix.
\begin{proposition}
    \label{prop:proj-entries}
    Let $\delta \in (0, 1)$.
    Let $\bP \in \RR^{n \times n}$ be the orthogonal projection matrix to a Haar-distributed subspace of $\RR^n$ having dimension $r \colonequals \delta n$.
    Then, for any $K > 0$, there exist constants $C_{\delta, K}, C^{\prime}_{\delta, K} > 0$ such
    \begin{equation}
        \PP\left[\max_{i, j \in [n]} \left\{\begin{array}{lll} \left|P_{ii} - \delta\right| & \text{if} & i = j \\ \left|P_{ij}\right| & \text{if} & i \neq j\end{array}\right\} \leq C_{\delta,K}\sqrt{\frac{\log n}{n}} \right] \geq 1 - \frac{C^{\prime}_{\delta,K}}{n^{K}}.
    \end{equation}
\end{proposition}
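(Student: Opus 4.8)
The plan is to control each of the $n^2$ entries of $\bP$ individually with a failure probability of order $n^{-K-2}$, and then take a union bound. Fix $r = \lfloor \delta n \rfloor$, so that $r/n = \delta + O(1/n)$, a discrepancy negligible next to $\sqrt{(\log n)/n}$. Writing $\mathbf{e}_i$ for the $i$th standard basis vector, $P_{ij} = \mathbf{e}_i^{\top} \bP \, \mathbf{e}_j$. The means are immediate from the invariance of the law of $\bP$ under conjugation by a fixed orthogonal matrix: conjugation by permutation matrices forces all diagonal entries to be identically distributed, so their common mean equals $\frac1n \mathbb{E}[\Tr \bP] = r/n$; and conjugation by the diagonal reflection $\mathbf{I} - 2\mathbf{e}_j\mathbf{e}_j^{\top}$ shows $P_{ij}$ and $-P_{ij}$ are equidistributed for $i \neq j$, so $\mathbb{E}[P_{ij}] = 0$.

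For the fluctuations I would use a single distributional fact together with a standard chi-squared tail bound. By rotational invariance of the Haar measure on subspaces, for any fixed nonzero $\mathbf{w}$ the ratio $\|\bP \mathbf{w}\|^2 / \|\mathbf{w}\|^2$ has the $\mathrm{Beta}(r/2, (n - r)/2)$ distribution, equivalently the law of $A/(A + B)$ for independent $A \sim \chi^2_r$ and $B \sim \chi^2_{n - r}$. The Laurent--Massart bounds give, for any $K$, that $|A - r|$ and $|(A + B) - n|$ are both $O_K(\sqrt{n \log n})$ except with probability $O_K(n^{-K-2})$ (using $A + B \sim \chi^2_n$), and dividing yields $\|\bP \mathbf{w}\|^2 = \|\mathbf{w}\|^2\big(\delta + O_K(\sqrt{(\log n)/n})\big)$ on that event. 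Taking $\mathbf{w} = \mathbf{e}_i$ controls each diagonal entry: $P_{ii} = \delta + O_K(\sqrt{(\log n)/n})$ with the required probability. For the off-diagonal entries I would apply the polarization identity $P_{ij} = \tfrac14\big(\|\bP(\mathbf{e}_i + \mathbf{e}_j)\|^2 - \|\bP(\mathbf{e}_i - \mathbf{e}_j)\|^2\big)$: since $\|\mathbf{e}_i \pm \mathbf{e}_j\|^2 = 2$, each of the two terms is $2\delta + O_K(\sqrt{(\log n)/n})$ with high probability, so their difference, and hence $P_{ij}$, is $O_K(\sqrt{(\log n)/n})$. A union bound over the $\binom{n}{2} + n \leq n^2$ entries then gives the proposition for suitable constants $C_{\delta,K}, C^{\prime}_{\delta,K}$.

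I do not anticipate a real obstacle; the only slightly delicate point is the bookkeeping in the chi-squared tails, where attaining failure probability $n^{-K-2}$ requires a deviation of order $\sqrt{(K+2)\log n}$ standard deviations and one must check that the resulting relative error on $A/(A+B)$ is still $O_K(\sqrt{(\log n)/n})$. A fully equivalent alternative avoids the Beta distribution: write $\bP = \mathbf{O}\mathbf{D}\mathbf{O}^{\top}$ with $\mathbf{O}$ Haar on the orthogonal group $\mathrm{O}(n)$ and $\mathbf{D}$ the $0/1$ diagonal matrix with $r$ ones, note that $\mathbf{O} \mapsto \mathbf{e}_i^{\top}\mathbf{O}\mathbf{D}\mathbf{O}^{\top}\mathbf{e}_j$ is $2$-Lipschitz in the Frobenius metric (add and subtract $\mathbf{O}'\mathbf{D}\mathbf{O}^{\top}$ and use $\|\mathbf{D}\|_{\mathrm{op}} = \|\mathbf{O}\|_{\mathrm{op}} = 1$), and invoke the sub-Gaussian concentration $\PP[|f - \mathbb{E} f| \geq t] \leq 2\exp(-cnt^2/L^2)$ of $L$-Lipschitz functions $f$ on $\mathrm{O}(n)$, which follows from the log-Sobolev inequality for $\mathrm{O}(n)$ with constant of order $1/n$. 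Taking $t \asymp \sqrt{(\log n)/n}$ again makes each entry's failure probability polynomially small, and the same union bound finishes the proof.
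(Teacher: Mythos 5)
The paper does not give a proof of Proposition~\ref{prop:proj-entries} at all: it simply cites \cite{KB-2019-Degree4SK} with the remark ``see, e.g., \ldots\ for a careful proof.'' So there is no in-paper argument to compare against, and your task is really to supply a proof from scratch, which you have done correctly.

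Both of your routes are sound. The Beta route is the cleanest: by rotational invariance, for fixed $\bw$ the random variable $\|\bP\bw\|^2/\|\bw\|^2$ has the law of $\sum_{k=1}^r g_k^2 / \sum_{k=1}^n g_k^2$ for i.i.d.\ standard Gaussians $g_k$, i.e.\ $\mathrm{Beta}(r/2,(n-r)/2)$, and since $\bP$ is a projector $\bw^\top\bP\bw = \|\bP\bw\|^2$, which justifies both the diagonal control and the polarization step $P_{ij} = \tfrac14\bigl(\|\bP(\be_i+\be_j)\|^2 - \|\bP(\be_i-\be_j)\|^2\bigr)$. Laurent--Massart at level $t = (K+2)\log n$ gives $|A-r|, |A+B-n| \le c\sqrt{n\log n}$ with probability $1 - O(n^{-K-2})$, and the ratio then deviates from $\delta$ by $O_{\delta,K}(\sqrt{(\log n)/n})$ as you claim; the union bound over $O(n^2)$ entries loses only $n^{-2}$, leaving $O(n^{-K})$. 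The Lipschitz-concentration alternative is also correct; the only small point to flag there is that the standard $\exp(-cnt^2/L^2)$ concentration is usually stated for $\mathrm{SO}(n)$ rather than $\mathrm{O}(n)$, but since the map $\bO \mapsto \bO\bD\bO^\top$ factors through $\mathrm{SO}(n)$ after composing with a fixed reflection, this changes nothing. Either version is a complete proof of the proposition; the paper's cited reference \cite{KB-2019-Degree4SK} proves essentially the same statement by a concentration argument of this flavor.
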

\noindent
See, e.g., \cite{KB-2019-Degree4SK} for a careful proof.

\begin{proof}[Proof of Theorem~\ref{thm:sdp}]
    For the upper bound, note that $\langle \bX, \bW \rangle \leq \lambda_{\max}(\bW)$ for any $\bX$ feasible for the SDP.
    The bound then follows from standard bounds on the spectrum of $\bW$ \cite{AGZ-2010-RandomMatrices}.

    For the lower bound, fix $\alpha, \delta \in (0, 1)$.
    Let $r = \delta n$, assuming for the sake of simplicity that this is an integer.
    Let $\bP$ be the orthogonal projector to the span of the $r$ eigenvectors of $\bW$ having the largest eigenvalues.
    Then, define
    \begin{equation}
        \label{eq:X-witness-def}
        \bX = \bX^{(\alpha, \delta)} \colonequals (1 - \alpha) \frac{1}{r}\bP + \alpha \frac{1}{n}\one_n\one_n^\top,
    \end{equation}
    where $\one_n \in \RR^n$ is the vector with all entries equal to 1.
    Denote by $F$ the event that $\bX$ is feasible for the SDP, i.e., the event that $\Tr(\bX) = 1$, $\bX \succeq \bm 0$, and $\bX \geq \bm 0$.

    We first show that, for any fixed $\alpha, \delta \in (0, 1)$, $\PP[F] \to 1$.
    Since $\Tr(\bP) = r$ we have $\Tr(\bX) = 1$, and $\bX \succeq \bm 0$ since $\bX$ is a convex combination of two positive semidefinite matrices.
    In particular, $X_{ii} \geq 0$ for all $i \in [n]$.
    For the off-diagonal entries, we observe that the range of $\bP$ is a Haar-distributed $r$-dimensional subspace of $\RR^n$.
    Thus by Proposition~\ref{prop:proj-entries}, with high probability, for all $i, j \in [n]$ with $i \neq j$,
    \begin{equation}
    X_{ij} \geq \frac{\alpha}{n} - \frac{1 - \alpha}{r}|P_{ij}| \geq \frac{\alpha}{n} - C_{\delta}(1 - \alpha)\frac{1}{n}\sqrt{\frac{\log n}{n}} \end{equation}
    for some constant $C_{\delta} > 0$ depending only on $\delta$.
    In particular, for $\alpha, \delta$ fixed as $n \to \infty$, this is non-negative for all sufficiently large $n$, thus $X_{ij} \geq 0$ for all $i, j \in [n]$ with high probability.
    Combining these observations, we find that $F$ occurs with high probability.

    On the event $F$, we have
    \begin{equation}
        \SDP(\bW) \geq \la \bW, \bX \ra = (1 - \alpha) \cdot \frac{1}{r}\sum_{i = 1}^r \lambda_i(\bW) + \alpha \frac{1}{n}\one_n^\top \bW \one_n.
    \end{equation}
    The second term is distributed as $\sN(0, 2\alpha^2 / n)$, and the first term is with high probability bounded below by $(1 - \alpha)(2 - f(\delta))$ for some $f(\delta)$ with $\lim_{\delta \to 0}f(\delta) = 0$, by the convergence of the law of the empirical spectrum of $\bW$ to the semicircle distribution \cite{AGZ-2010-RandomMatrices}.
    In particular, for any $\varepsilon > 0$, we may choose $\alpha, \delta \in (0, 1)$ sufficiently small that the above argument shows $\SDP(\bW) \geq 2 - \varepsilon$ with high probability.
\end{proof}
\noindent
We note that the general proof technique of ``nudging'' an initial construction that is not feasible for a convex program towards a deterministic feasible point has been used before for relaxations of the cut polytope where the latter point is the identity matrix  \cite{AU-2003-LPMaxCut,KB-2019-Degree4SK,MRX-2019-SOS4}.
Our proof adapts this to our different SDP constraints by using the all-ones matrix for this purpose instead.

\begin{question}[Sum-of-squares relaxations]
    It would be interesting to extend this result to lower bounds on higher-degree sum-of-squares relaxations of $\lambda^+(\bW)$; based on our results in Section~\ref{sec:low-deg}, it is natural to conjecture that no relaxation of constant degree certifies a bound strictly smaller than 2 on $\lambda^+(\bW)$ as $n \to \infty$ (since this would refute Conjecture~\ref{conj:cert}).
    We remark that, in working with inequality constraints, there are a number of reasonable ways to formulate a sum-of-squares relaxation of given degree; see, e.g., \cite{Laurent-2009-SOS,OZ-2013-ApproximabilityProofComplexity} for some discussion of these details.
    To the best of our knowledge, lower bounds for sum-of-squares relaxations with inequality constraints have not been studied for high-dimensional random problems, so this problem would be a convenient testing ground to see whether these nuances play an important technical role.
\end{question}

\section{Evidence for General Hardness of Certification}
\label{sec:low-deg}

We first formalize the notion of a certification algorithm.
\begin{definition}[Certification algorithm]
\label{def:cert}
    Suppose an algorithm takes as input $\bW \in \RR^{n \times n}_{\sym}$ and outputs a number $c(\bW) \in \RR$ such that $c(\bW) \geq \lambda^+(\bW)$ for all $\bW \in \RR^{n \times n}_{\sym}$.
    If when $\bW \sim \GOE(n)$ then $c(\bW) \leq K$ with high probability as $n \to \infty$, then we say that this algorithm \emph{certifies} the bound $\lambda^+(\bW) \le K$.
\end{definition}
\noindent
The key property of a certification algorithm is that it must give a valid upper bound on $\lambda^+(\bW)$ no matter what input matrix $\bW$ is supplied; in particular, it must even do so for $\bW$ that are atypical under the distribution $\GOE(n)$. However, this upper bound only needs to be a ``good'' bound for typical $\bW \sim \GOE(n)$. One notable class of certification algorithms are convex relaxations such as the SDP~\eqref{eq:SDP}. Note that $\lambda_{\max}(\bW)$ certifies the bound $\lambda^+(\bW) \le 2 + o(1)$. The goal of this section is to provide formal evidence for the following conjecture, which states that this simple certificate cannot be improved except by a fully exponential-time ``brute force'' search.

\begin{conjecture}\label{conj:cert}
For any fixed $\varepsilon > 0$ and $\eta > 0$, there is no algorithm of runtime $\exp(O(n^{1-\eta}))$ that certifies the bound $\lambda^+(\bW) \le 2 - \varepsilon$ (in the sense of Definition~\ref{def:cert}).
\end{conjecture}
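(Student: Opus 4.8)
Since the statement is a conjecture, the goal is to provide low-degree evidence for it, so the plan is to exhibit a reduction from a hypothesis testing problem to certification and then argue, via the low-degree likelihood ratio, that the testing problem is hard. For the reduction, suppose toward a contradiction that for some $\varepsilon,\eta>0$ there is an algorithm of runtime $\exp(O(n^{1-\eta}))$ computing $c(\bW)\ge\lambda^+(\bW)$ for all $\bW\in\RR^{n\times n}_{\sym}$ with $c(\bW)\le 2-\varepsilon$ with high probability under $\QQ=\GOE(n)$. I would introduce a ``planted'' distribution $\PP'$ on $\RR^{n\times n}_{\sym}$ with the property that $\lambda^+(\bW)\ge 2-\varepsilon/2$ with high probability when $\bW\sim\PP'$. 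Then $\bW\mapsto\mathbf{1}[c(\bW)>2-\frac34\varepsilon]$ is a test that with high probability outputs $0$ under $\QQ$ (since $c\le 2-\varepsilon$ there) and $1$ under $\PP'$ (since $c(\bW)\ge\lambda^+(\bW)\ge 2-\varepsilon/2$ there), and it runs in time $\exp(O(n^{1-\eta}))$; so it suffices to argue that no such algorithm distinguishes $\PP'$ from $\QQ$ with high probability.

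The delicate step is to choose $\PP'$ so that the planted signal is simultaneously strong enough to force $\lambda^+$ up to nearly $2$ and weak enough to be undetectable. In particular $\PP'$ must satisfy $\lambda_{\max}(\bW)\le 2+o(1)$ with high probability, since $\lambda_{\max}$ is polynomial-time computable and a distribution with an outlier eigenvalue is trivially distinguishable from $\GOE(n)$. The natural attempt recycles the $\sqrt2$ of objective value available ``for free'' by Proposition~\ref{prop:true-value}: letting $\bx^{\star}=\bx^{\star}(\bW_0)$ denote a (near-)maximizer of $\bx^{\top}\bW_0\bx$ over $\{\bx\ge 0,\ \|\bx\|_2=1\}$ for $\bW_0\sim\GOE(n)$, and fixing a strength $\beta$, one plants the rank-one term $\bW=\bW_0+\beta\,\bx^{\star}(\bx^{\star})^{\top}$ and tunes $\beta$ so that $(\bx^{\star})^{\top}\bW\bx^{\star}=\lambda^+(\bW_0)+\beta=\sqrt2+\beta+o(1)$ exceeds $2-\varepsilon/2$. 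Unlike the usual spiked Wigner model, the spike direction here depends on the noise, which is both what produces the $\sqrt2$ head start and what makes the analysis nonstandard; a more tractable variant replaces $\bx^{\star}$ by the output of a bounded number of iterations of the approximate message passing algorithm of \cite{MR-2015-NonNegative}, which is a low-degree function of $\bW_0$ and still attains objective value $\sqrt2-o(1)$, so that $\PP'$ becomes amenable to the low-degree machinery.

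To support that $\PP'$ versus $\QQ$ is hard, I would bound the norm of the degree-$\le D$ projection of the likelihood ratio $d\PP'/d\QQ$ and show it remains bounded (or grows sub-polynomially) for $D=n^{1-\eta}$; by the low-degree heuristic this is evidence that no $\exp(\tilde O(n^{1-\eta}))$-time test distinguishes the two distributions, which together with the reduction gives Conjecture~\ref{conj:cert}. Concretely, one expands the likelihood ratio over Hermite characters (equivalently, over multigraphs on $[n]$) and estimates each term using that the ``visible'' part of the planted signal is a sub-threshold rank-one perturbation, drawing on the known low-degree hardness of detecting such perturbations, combined with the fact that the spike direction is a low-degree function of $\bW_0$.

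The step I expect to be the main obstacle is controlling how much the noise-dependence of the planted direction leaks into cheap statistics. A rank-one perturbation $\beta\,\bx^{\star}(\bx^{\star})^{\top}$ along a direction with the atypically large quadratic form $(\bx^{\star})^{\top}\bW_0\bx^{\star}=\sqrt2$ tends to push $\lambda_{\max}$ above $2$ --- a Jensen-type estimate on the spectral measure of $\bx^{\star}$ already forces the Stieltjes-transform condition for an outlier once $\beta\gtrsim 2-\sqrt2$, which is essentially the range we need --- and one must verify that $\lambda_{\max}$, the low-order traces $\Tr(\bW^k)$, and contractions against $\one_n$ (the $\one_n$-alignment that drives the lower bound in Theorem~\ref{thm:sdp}) all fail to separate $\PP'$ from $\QQ$ with high probability. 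Threading this needle --- reaching planted objective value $2-O(\varepsilon)$ while keeping every bounded-degree statistic, and ultimately the full degree-$n^{1-\eta}$ likelihood ratio, bounded --- is the core difficulty, and it may force a higher-rank or otherwise more carefully spread-out planted structure rather than a single rank-one spike.
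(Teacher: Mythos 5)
Your high-level plan---reduce from a hypothesis testing problem to certification and invoke the low-degree heuristic for the testing problem---is exactly the framework the paper uses. But the specific reduction you propose has a genuine gap that you yourself flag but do not resolve, and the paper's construction sidesteps it entirely by a different choice of planted model.

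The gap: your $\PP'$ plants $\beta\,\bx^\star(\bx^\star)^\top$ on top of $\bW_0 \sim \GOE(n)$, where $\bx^\star$ already achieves $(\bx^\star)^\top\bW_0\bx^\star = \sqrt{2} - o(1)$. To push $\lambda^+(\bW)$ up to $2 - \varepsilon/2$ you need $\beta \gtrsim 2-\sqrt{2}-\varepsilon/2$, yet your own Jensen estimate shows the Stieltjes-transform condition for an outlier eigenvalue is triggered once $\beta \gtrsim 2-\sqrt{2}$. The admissible window for $\beta$ is at most $O(\varepsilon)$ wide, and the Jensen bound is only one-sided, so the true outlier threshold could lie strictly below $2-\sqrt 2$ and close the window entirely. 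Even if a window survives, you would then need to bound the degree-$n^{1-\eta}$ projection of the likelihood ratio for a planted direction that is itself a (possibly high-degree, and certainly not explicitly analyzable) function of the noise $\bW_0$; no existing low-degree calculation handles that. Your suggestion to replace $\bx^\star$ by $O(1)$ AMP iterations makes the spike a low-degree function of $\bW_0$, but it does not give a hardness \emph{input}: you would still need to prove a new low-degree lower bound from scratch for this bespoke planted model.

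The paper instead reduces from the \emph{negatively-spiked Wishart} detection problem, where hardness below the BBP threshold is an already-established low-degree result (Conjecture~\ref{conj:wishart}, from \cite{BKW-2019-ConstrainedPCA}). Given Wishart samples $\by_1,\dots,\by_N$, it draws a fresh $\widetilde\bW\sim\GOE(n)$ and re-expresses its eigenvalues in a basis whose first $N$ vectors span $\mathsf{span}(\by_1,\dots,\by_N)$. Under the Wishart null the resulting $\bW$ is \emph{exactly} $\GOE(n)$---so there is no outlier to worry about by construction, and no new low-degree computation is needed for the null. Under the negatively-spiked alternative, the samples are nearly orthogonal to the spike $\bu$, so $\bu$ lands in the top eigenspace of $\bW$; taking the spike prior to be the centered Bernoulli $\mathcal{X}_\rho$ makes $\bu$ mean-zero (as Wishart hardness requires) while highly correlated with a non-negative test vector $\widehat\bz$, forcing $\widehat\bz^\top\bW\widehat\bz > 2-\varepsilon$ and hence $c(\bW) > 2-\varepsilon$. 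This is the key structural idea your proposal is missing: don't try to plant into a GOE matrix and then argue the plant is invisible; instead manufacture the GOE matrix from the samples of a testing problem whose hardness is already in hand.
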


\noindent We will argue that the certification problem is hard by reduction from a particular hypothesis testing problem, which we define next.
\begin{definition}[Centered Bernoulli distribution]
For a constant $\rho \in (0,1)$, let $\mathcal{X}_\rho$ be the distribution over $\RR^n$ where $\bu \sim \sX_{\rho}$ is drawn by drawing each coordinate $u_i$ independently as
\begin{equation}
    u_i = \left\{\begin{array}{ll} \sqrt{\frac{1-\rho}{\rho n}} & \text{with probability } \rho, \\ -\sqrt{\frac{\rho}{(1-\rho)n}} & \text{with probability } 1-\rho. \end{array}\right.
\end{equation}
\end{definition}
\noindent
This is scaled so that $\EE[u_i] = 0$ and $\|\bu\| \to 1$ in probability.
\begin{definition}
    Given constants $\gamma > 0$ and $\beta > -1$, the \emph{spiked Wishart model} with spike prior $\sX_{\rho}$ consists of the following pair of probability distributions.
    Let $N = N(n) \in \NN$ such that $n / N \to \gamma$ as $n \to \infty$.
    \begin{itemize}
    \item Under $\QQ$, draw $\by_1, \dots, \by_N \sim \sN(\bm 0, \bm I_n)$ independently.
    \item Under $\PP$, first draw $\bu \sim \sX_{\rho}$.
        If $\beta \|\bu\|^2 \leq -1$, draw $\by_1 = \cdots = \by_N = 0$.
        Otherwise, draw $\by_1, \dots, \by_N \sim \sN(\bm 0, \bm I_n + \beta \bu\bu^{\top})$ independently (noting that the covariance matrix is positive definite).
    \end{itemize}
    If $\beta < 0$, we call such a model a \emph{negatively-spiked Wishart model}.
\end{definition}

\noindent We will consider the \emph{strong detection} problem where the goal is to give a test $f: \RR^{n \times N} \to \{\ttp, \ttq\}$ that takes input $\by = (\by_1,\ldots,\by_N)$ and distinguishes between $\PP$ and $\QQ$ with error probability $o(1)$, i.e.,
\begin{equation}\label{eq:strong-det}
\lim_{n \to \infty} \PP[f(\by) = \ttp] = \lim_{n \to \infty} \QQ[f(\by) = \ttq] = 1.
\end{equation}
When $\beta^2 > \gamma$ (the ``BBP transition''), it is well-known that strong detection is possible in polynomial time via the maximum (if $\beta > 0$) or minimum (if $\beta < 0$) eigenvalue of the sample covariance matrix~\cite{BBAP-2005-LargestEigenvalueSampleCovariance,BS-2006-EigenvaluesSampleCovariance}. While strong detection is sometimes (depending on $\rho$) possible when $\beta^2 < \gamma$ via brute force search, this is conjectured to be impossible in subexponential time.

\begin{conjecture}[\cite{BKW-2019-ConstrainedPCA} Conjecture 3.1, Corollary 3.3]
\label{conj:wishart}
For any constants $\gamma > 0$, $\beta > -1$, $\rho \in (0,1)$, $\eta > 0$ such that $\beta^2 < \gamma$, there is no algorithm of runtime $\exp(O(n^{1-\eta}))$ that achieves strong detection in the spiked Wishart model with parameters $\gamma,\beta$ and spike prior $\mathcal{X}_\rho$.
\end{conjecture}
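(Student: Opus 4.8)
Since Conjecture~\ref{conj:wishart} is a statement about \emph{computational} hardness carried over from \cite{BKW-2019-ConstrainedPCA}, it cannot be proved unconditionally; the plan is instead to establish it in the now-standard sense of providing a low-degree polynomial lower bound and then invoking the ``low-degree heuristic.'' Concretely, let $L_n \colonequals \frac{d\PP_n}{d\QQ_n}$ be the likelihood ratio between the two distributions of the spiked Wishart model, regarded as a function of the data $\by = (\by_1, \dots, \by_N) \in \RR^{n \times N}$, and let $L_n^{\le D}$ be its orthogonal projection in $L^2(\QQ_n)$ onto polynomials in the entries of $\by$ of degree at most $D$. The heuristic predicts that if $\|L_n^{\le D}\| = O(1)$ for all $D \le n^{1 - \epsilon}$, then no algorithm of runtime $\exp(O(n^{1 - \epsilon}))$ achieves strong detection; since one expects this to hold for every constant $\epsilon > 0$ in the regime $\beta^2 < \gamma$, it yields the stated hardness against runtime $\exp(O(n^{1 - \eta}))$ for all $\eta > 0$. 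So the substantive goal is the bound $\|L_n^{\le D}\| = O(1)$ for $D$ as large as $n^{1-\epsilon}$.

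The first step is to obtain a tractable expression for $\|L_n^{\le D}\|^2$. Because conditioned on the spike $\bu$ the samples $\by_k$ are i.i.d.\ and the prior $\sX_\rho$ is a product measure, expanding $L_n$ in the orthonormal basis of products of univariate Hermite polynomials makes the per-coordinate contributions decouple, and one is led to an identity of the schematic form
\[
\|L_n^{\le D}\|^2 = \EE_{\bu, \bu' \sim \sX_\rho}\!\left[\, \sum_{d = 0}^{D} c_{N, d}\, \beta^{2d}\, \langle \bu, \bu' \rangle^{2d} \,\right],
\]
with $\bu, \bu'$ independent and $c_{N, d}$ explicit combinatorial coefficients coming from the relevant Wishart/Hermite moment identities (incorporating the $\det(\bm I_n + \beta \bu\bu^{\top})^{-1/2}$ normalization), which satisfy $c_{N, d} \le N^d$ up to factors that are lower-order in $d$. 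This is the Wishart counterpart of the additive-model computations standard in the low-degree literature, and its derivation is essentially bookkeeping.

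The second step is to bound this sum, and it contains the main obstacle. Each coordinate of $\bu$ is centered with variance $1/n$, so $\langle \bu, \bu' \rangle$ concentrates at scale $n^{-1/2}$; plugging this in, a \emph{typical} term is of order $(\beta^2 N / n)^d \le (\beta^2/\gamma)^d (1 + o(1))^d$, and the geometric series over $d$ is bounded precisely when $\beta^2 < \gamma$. The difficulty is that for $d$ growing polynomially in $n$ (up to $n^{1-\epsilon}$), the moment $\EE[\langle \bu, \bu' \rangle^{2d}]$ is governed not by the typical overlap but by its \emph{upper tail} --- rare configurations with $\langle \bu, \bu' \rangle \gg n^{-1/2}$ dominate high moments --- so one needs a sharp large-deviation bound on $\langle \bu, \bu' \rangle$ tailored to the discrete, asymmetric prior $\sX_\rho$, controlling its moment generating function by hand rather than quoting a Gaussian estimate. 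One must then show that these tail contributions, integrated against the $c_{N, d}$ and summed over $d \le n^{1-\epsilon}$, still total $O(1)$ whenever $\beta^2 < \gamma$. This is exactly where the phase transition at $\beta^2 = \gamma$ surfaces: for $\beta^2 > \gamma$ the sum diverges, matching the known easiness of detection via the extreme eigenvalue of the sample covariance matrix.

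Finally, assembling the closed form with the tail estimate gives $\|L_n^{\le D}\| = O(1)$ for all $D \le n^{1 - \epsilon}$ and every constant $\epsilon > 0$ in the regime $\beta^2 < \gamma$, and one then invokes the formal low-degree conjecture relating degree-$n^{1-\epsilon}$ indistinguishability to the absence of $\exp(O(n^{1-\eta}))$-time strong-detection algorithms --- the content quoted as Corollary~3.3 of \cite{BKW-2019-ConstrainedPCA} --- to conclude.
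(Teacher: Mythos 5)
This statement is a conjecture imported from \cite{BKW-2019-ConstrainedPCA}, and the paper offers no proof of it: it is assumed as a hardness hypothesis and its role is to imply Conjecture~\ref{conj:cert} via the reduction of Theorem~\ref{thm:reduction}. You correctly recognize that it cannot be proved unconditionally, and your sketch is a fair reconstruction of the low-degree evidence that \cite{BKW-2019-ConstrainedPCA} gives (their Theorem~3.2). So there is no paper-internal proof to compare you against, and your framing of the statement's epistemic status --- low-degree lower bound plus heuristic --- matches how the paper treats it.

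A few factual adjustments to your sketch are worth noting. First, the degree bound actually established in \cite{BKW-2019-ConstrainedPCA} is $D = o(n/\log n)$, which is stronger than the $D \le n^{1-\epsilon}$ you invoke; both suffice to rule out $\exp(O(n^{1-\eta}))$-time algorithms under the heuristic, but be precise about which is proven. Second, the clean expansion of $\|L_n^{\le D}\|^2$ in the Wishart setting comes from the rank-one perturbation structure of the covariance (via Sherman--Morrison, the likelihood ratio per sample depends only on $\langle \bu, \by_k\rangle$), so the decoupling is per-sample rather than genuinely per-coordinate-and-prior; the product structure of $\sX_\rho$ enters only in the final step of bounding $\EE[\langle \bu,\bu'\rangle^{2d}]$. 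Third, you correctly flag that the tail of $\langle\bu,\bu'\rangle$ is the substantive technical step, but you stop short of giving the bound --- which is fine for a sketch, but means the proposal is a roadmap to the cited Theorem~3.2 rather than a self-contained argument. Given that the paper itself only cites this evidence, the gap between your sketch and a complete proof is the same gap the paper deliberately leaves.
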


\noindent This conjecture is justified in~\cite{BKW-2019-ConstrainedPCA} by formal evidence based on the \emph{low-degree polynomial method}, a framework based on~\cite{pcal,HS-bayesian,sos-hidden,sam-thesis} that has been successful in predicting and explaining computational hardness in a wide variety of tasks in high-dimensional statistics; see~\cite{KWB-2019-NotesLowDegree} for a survey. More precisely, it is shown (Theorem~3.2 of~\cite{BKW-2019-ConstrainedPCA}) that when $\beta^2 < \gamma$, no multivariate polynomial $f: \RR^{n \times N} \to \RR$ of degree $D = o(n/\log n)$ can distinguish $\PP$ and $\QQ$ in the sense of $\EE_\PP[f(\by)] \to \infty$ while $\EE_\QQ[f(\by)^2] = 1$. (This is true not only for the centered Bernoulli prior but more generally for any spike prior where the $u_i$ are distributed i.i.d.\ as $\frac{1}{\sqrt n} \pi$ for a fixed distribution $\pi$ on $\RR$ that is subgaussian with $\EE[\pi] = 0$ and $\EE[\pi^2] = 1$.) Degree-$D$ polynomial tests of the above form are believed to be as powerful as any $\exp(\tilde\Omega(D))$-time algorithm (where $\tilde\Omega$ hides factors of $\log n$) for a broad class of high-dimensional testing problems; see~\cite{sam-thesis,KWB-2019-NotesLowDegree,subexp-sparse}.

We are now prepared to state the main result of this section, which shows that if it is possible to certify a bound on $\lambda^+(\bW)$ below 2, then it is possible to produce a test between $\PP$ and $\QQ$ in a particular negatively-spiked Wishart model whose parameters lie in the ``hard'' regime $\beta^2 < \gamma$. An immediate consequence is that Conjecture~\ref{conj:wishart} implies Conjecture~\ref{conj:cert}.

\begin{theorem}[Reduction from detection to certification]
    \label{thm:reduction}
    Suppose there exists a constant $\varepsilon > 0$ and a $t(n)$-time certification algorithm $c: \RR^{n \times n}_{\sym} \to \RR$ for $\lambda^+$ such that, with high probability as $n \to \infty$, $c(\bW) \leq 2 - \varepsilon$ when $\bW \sim \GOE(n)$.
    Then there exist constants $\gamma > 1$, $\beta \in (-1,0)$, and $\rho \in (0,1)$ (depending on $\varepsilon$) such that there is a $(t(n)+\mathsf{poly}(n))$-time algorithm computing $f: \RR^{n \times N} \to \{\ttp, \ttq\}$ that achieves \emph{strong detection} (in the sense of~\eqref{eq:strong-det}) in the negatively-spiked Wishart model with parameters $\gamma,\beta$ and spike prior $\mathcal{X}_\rho$.
\end{theorem}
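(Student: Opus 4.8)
The plan is a black-box (Karp-style) reduction that runs $c$ once as a subroutine, in the spirit of the reduction of~\cite{BKW-2019-ConstrainedPCA}. Given Wishart data $\bY = (\by_1, \ldots, \by_N) \in \RR^{n \times N}$, the test $f$ first builds, in $\poly(n)$ time and using fresh internal randomness, a symmetric matrix $\bW'(\bY) \in \RR^{n \times n}_{\sym}$ engineered to have two properties: \textbf{(N)} when $\bY \sim \QQ$, $\bW'$ is distributed as $\GOE(n)$ (or at least is contiguous to it); and \textbf{(A)} when $\bY \sim \PP$ with spike $\bu \sim \mathcal{X}_\rho$, one has $\lambda^+(\bW') \geq 2 - \varepsilon/2$ with high probability. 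The test then outputs $\ttp$ if $c(\bW') > 2 - \varepsilon/2$ and $\ttq$ otherwise. Correctness is immediate: under $\QQ$, property \textbf{(N)} lets the high-probability event $\{c(\bW) \leq 2 - \varepsilon\}$ for $\bW \sim \GOE(n)$ transfer to $\bW'$ (directly if the two laws are close in total variation, or via contiguity since $\{c \leq 2 - \varepsilon\}$ has probability tending to $1$ under $\GOE(n)$), so $c(\bW') \leq 2 - \varepsilon < 2 - \varepsilon/2$ and $f$ outputs $\ttq$ with high probability; under $\PP$, the defining inequality $c(\bW') \geq \lambda^+(\bW')$ --- valid for \emph{every} input --- together with \textbf{(A)} gives $c(\bW') \geq 2 - \varepsilon/2$, so $f$ outputs $\ttp$ with high probability. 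This is exactly strong detection in the sense of~\eqref{eq:strong-det}, the running time is $\poly(n) + t(n)$, and $\eta$ is inherited, so Conjecture~\ref{conj:wishart} implies Conjecture~\ref{conj:cert}.

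For property \textbf{(A)} I would build $\bW'$ around a rescaled, recentered sample second-moment matrix of the form $a(\bm I_n - \frac{1}{N} \bY \bY^\top)$: for a unit vector $\bx$ the quadratic form $\frac{1}{N} \|\bY^\top \bx\|^2 = \bx^\top(\frac{1}{N} \bY\bY^\top)\bx$ concentrates under $\PP$ to $1 + \beta \langle \bu, \bx\rangle^2 \leq 1$ (because $\beta < 0$), so this matrix is ``unusually large'' over the nonnegative sphere precisely in the low-variance direction $\bu$ of a \emph{negative} spike --- with $\beta > 0$ it would instead make $\lambda^+$ smaller, which is why the negative-spike model is the right source of hardness. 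To make the bound explicit I exhibit the nonnegative unit vector $\hat\bx \colonequals \bu^+ / \|\bu^+\|$, $\bu^+$ being the coordinatewise positive part of $\bu$. Since $\bu \sim \mathcal{X}_\rho$ has $\approx \rho n$ positive coordinates, each equal to $\sqrt{(1-\rho)/(\rho n)}$, we get $\|\bu^+\|^2 \to 1 - \rho$ and $\langle \bu, \hat\bx \rangle = \|\bu^+\| \to \sqrt{1-\rho}$; and since $\hat\bx$ depends only on $\bu$, hence is independent of the Gaussian noise in $\bY$, a law-of-large-numbers argument conditional on $\bu$ gives $\frac{1}{N}\|\bY^\top \hat\bx\|^2 \to \hat\bx^\top(\bm I_n + \beta\bu\bu^\top)\hat\bx = 1 + \beta(1-\rho)$ in probability. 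So $\hat\bx^\top \bW' \hat\bx$ concentrates to an explicit quantity increasing in $a$, in $|\beta|$, and in $1-\rho$, and $\lambda^+(\bW') \geq \hat\bx^\top \bW' \hat\bx$ is at least that quantity with high probability. The constants are then tuned: any $\gamma > 1$, $\beta \in (-1,0)$, $\rho \in (0,1)$ automatically has $\beta^2 < 1 < \gamma$, so the Wishart model sits in the conjecturally hard regime of Conjecture~\ref{conj:wishart}, and it remains only to check that the resulting elementary inequality ``(explicit quantity) $> 2 - \varepsilon/2$'' is solvable in this window --- which it should be on taking $|\beta|$ close to $1$, $\rho$ small, and $a$ as large as property \textbf{(N)} permits.

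I expect property \textbf{(N)} --- arranging that the data-derived matrix is $\GOE(n)$-distributed (or at least contiguous to $\GOE(n)$) under the Wishart null --- to be the main obstacle, and where the particular parameter window originates. The naive candidate above, a rescaled recentered sample covariance, is manifestly non-Gaussian: its empirical spectral distribution converges to a shifted Marchenko--Pastur law rather than the semicircle, and since $\gamma > 1$ the matrix is even rank-deficient; as the certification guarantee only controls $c$ on genuine $\GOE(n)$ inputs, it says nothing about $c$ on such a matrix. The construction must therefore ``Gaussianize'': it must combine the sample statistics with fresh randomness so that the null marginal becomes $\GOE(n)$, or at least close enough --- in total variation, or merely contiguous --- that the certification event still transfers, all while retaining enough of the spike to make step \textbf{(A)} go through. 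Reconciling those two demands --- a Gaussian null and a spike strong enough to push $\lambda^+$ past $2 - \varepsilon$ --- is, I believe, what forces $\gamma > 1$ and $\beta \in (-1,0)$, and is the step I would expect to require the most care.
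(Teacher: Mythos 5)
You have the right skeleton: a black-box reduction that feeds a matrix $\bW'(\bY)$ to the certifier, requiring that $\bW'$ be distributed as $\GOE(n)$ under the null $\QQ$ (your property (N)) while under $\PP$ the nonnegative vector $\hat\bx = \bu^+/\|\bu^+\|$ (which is exactly the paper's $\widehat{\bz}$) forces $\lambda^+(\bW') > 2 - \varepsilon$ with high probability (your property (A)). You have also correctly identified that the negative spike is what makes the null direction stick out, and that $\beta^2 < 1 < \gamma$ places the model in the hard regime. But the heart of the reduction --- the actual construction of $\bW'$ --- is missing. The only concrete candidate you write down, $a(\bm I_n - \tfrac{1}{N}\bY\bY^\top)$, you yourself correctly dismiss: it is non-Gaussian and (since $\gamma > 1$) rank-deficient, so the certification guarantee, which is only calibrated against $\GOE(n)$ inputs, says nothing about it. You then say the construction ``must Gaussianize'' by mixing in fresh randomness, but you do not say how, and you flag this as the step you cannot complete. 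That step is the entire content of the theorem; without it there is no proof, only a reduction template.

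The paper resolves it as follows. Draw a fresh $\widetilde{\bW} \sim \GOE(n)$ with eigenvalues $\lambda_1 \le \cdots \le \lambda_n$. Let $\bv_1,\dots,\bv_N$ be a uniformly random orthonormal basis of $V = \mathsf{span}(\by_1,\dots,\by_N)$, $\bv_{N+1},\dots,\bv_n$ a uniformly random orthonormal basis of $V^\perp$, and set $\bW' = \sum_i \lambda_i \bv_i \bv_i^\top$; i.e., assign the \emph{smallest} $N$ eigenvalues of a fresh GOE to the sample span and the largest $n-N$ to its complement. Under $\QQ$, $V$ is a Haar-random $N$-dimensional subspace independent of $\{\lambda_i\}$, so $(\bv_1,\dots,\bv_n)$ is a Haar-random orthonormal frame, and $\bW'$ is \emph{exactly} $\GOE(n)$ --- not merely contiguous, so the transfer of $\{c \le 2-\varepsilon\}$ is immediate. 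Under $\PP$ with $\beta$ near $-1$, the samples nearly avoid the direction $\bu$, so $\widehat{\bz}$ projects mostly into the top eigenspace $V^\perp$; the leakage $\sum_{i\le N}\langle\widehat{\bz},\bv_i\rangle^2$ is bounded by $\frac{1}{\mu N}\sum_i\langle\widehat{\bz},\by_i\rangle^2$, where $\mu$, the smallest nonzero eigenvalue of the sample covariance, satisfies $\mu \to (\sqrt{\gamma}-1)^2 > 0$ precisely because $\gamma > 1$. Combined with $\frac{1}{N}\sum_i\langle\widehat{\bz},\by_i\rangle^2 \to 1 + \beta(1-\rho)$ (which you computed correctly), and choosing $\gamma$ close to $1$ so that $\lambda_{N+1} \ge 2 - \varepsilon/2$, one gets $\lambda^+(\bW') \ge \widehat{\bz}^\top\bW'\widehat{\bz} > 2 - \varepsilon$ for $\beta$ near $-1$ and $\rho$ small. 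This eigenvalue-transplantation construction --- borrowed from Theorem 3.8 of~\cite{BKW-2019-ConstrainedPCA} --- is the missing idea; everything else in your write-up is sound but subsidiary to it.
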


\noindent The proof is similar to that of Theorem~3.8 in our prior work~\cite{BKW-2019-ConstrainedPCA} (which gives the analogous result when the constraint set is $\{\pm 1/\sqrt{n}\}^n$ instead of the positive orthant) with one key difference. As in~\cite{BKW-2019-ConstrainedPCA}, the idea of the reduction is to create a GOE matrix whose top eigenspace has been ``rotated'' to align with the orthogonal complement of the span of the given Wishart samples. If the samples come from $\PP$ (with $\beta$ slightly greater than $-1$ and $\gamma$ slightly greater than $1$) then the Wishart samples are nearly orthogonal to the planted vector $\bu$, so this has the effect of planting $\bu$ in the top eigenspace of the matrix. We would like to plant a non-negative vector in the top eigenspace so that any certifier is forced to output a bound larger than $2-\varepsilon$. However, we cannot take $\bu$ to be non-negative because it is important for Wishart hardness (Conjecture~\ref{conj:wishart}) that $\bu$ have mean zero. The key idea is to instead choose $\bu$ to be a mean-zero vector that is highly correlated with a certain non-negative vector $\widehat{\bz}$; this is the purpose of the centered Bernoulli prior $\mathcal{X}_\rho$.

\begin{proof}[Proof of Theorem~\ref{thm:reduction}]
    Suppose a certification algorithm as stated exists.
    We will use this to design a test $f$ achieving strong detection in the negatively-spiked Wishart model.

    Call $\by_1, \dots, \by_N$ the samples from the Wishart model.
    Draw $\widetilde{\bW} \sim \GOE(n)$ and let $\lambda_1 \le \cdots \le \lambda_n$ be its eigenvalues.
    Let $\bv_1,\dots,\bv_N$ be a uniformly random orthonormal basis for $V \colonequals \mathsf{span}(\{\by_1,\ldots,\by_N\})$ and let $\bv_{N+1},\dots,\bv_n$ be a uniformly random orthonormal basis for the orthogonal complement $V^\perp$.
    Let $\bW \colonequals \sum_{i=1}^n \lambda_i \bv_i \bv_i^\top$.

    Then, using the certification algorithm computing $c(\bW)$ as a subroutine, we compute the test $f$ as
    \begin{equation}
        f(\bW) \colonequals \left\{\begin{array}{ll} \ttq & \text{if } c(\bW) \leq 2 - \varepsilon, \\ \ttp & \text{otherwise}. \end{array} \right.
    \end{equation}
    When $(\by_1, \dots, \by_N) \sim \QQ$, then $\bW$ has the law $\GOE(n)$, so $c(\bW) \leq 2 - \varepsilon$ with high probability, and thus $f(\bW) = \ttq$ with high probability.
    Thus to complete the proof it suffices to show that, when $(\by_1, \dots, \by_N) \sim \PP$, then $c(\bW) > 2 - \varepsilon$ with high probability, whereby we will have $f(\bW) = \ttp$ with high probability.

    To this end, suppose $(\by_1, \dots, \by_N) \sim \PP$ with $\bu$ the spike vector.
    Let $\bz \ge 0$ be the vector
    \begin{equation}
        z_i = \left\{\begin{array}{ll} 1 / \sqrt{\rho n} & \text{if } u_i > 0 \\ 0 & \text{otherwise}, \end{array}\right.
    \end{equation}
    and note that $\|\bz\| \to 1$ and $\langle \bz,\bu \rangle \to \sqrt{1-\rho}$ in probability.
    Let $\widehat{\bz} \colonequals \bz / \|\bz\|$.
    We then have
    \begin{align}
      c(\bW)
      &\ge \lambda^+(\bW) \nonumber \\
      &\geq \widehat{\bz}^\top \bW \widehat{\bz} \nonumber \\
      &= \sum_{i=1}^n \lambda_i \langle \widehat{\bz}, \bv_i \rangle^2 \nonumber \\
      &\ge \lambda_1 \sum_{i=1}^N \langle \widehat{\bz}, \bv_i \rangle^2 + \lambda_{N+1} \sum_{i=N+1}^n \langle \widehat{\bz}, \bv_i \rangle^2 \nonumber \\
      &\ge \lambda_1 \sum_{i=1}^N \langle \widehat{\bz}, \bv_i \rangle^2 + \lambda_{N+1} \left(1 - \sum_{i=1}^N \langle \widehat{\bz}, \bv_i \rangle^2 \right) \nonumber \\
      \intertext{and, since $\{\bv_1,\ldots,\bv_n\}$ is an orthonormal basis and $\|\widehat{\bz}\| = 1$,}
      &= \lambda_{N+1} - (\lambda_{N+1} - \lambda_1) \sum_{i=1}^N \langle \widehat{\bz},\bv_i \rangle^2.
    \end{align}
    Recalling that $\{\bv_i\}_{i=1}^N$ is an orthonormal basis for $\mathsf{span}(\{\by_1,\ldots,\by_N\})$, we have $\sum_{i=1}^N \bv_i \bv_i^\top \preceq \frac{1}{\mu} \bY$ where $\bY = \frac{1}{N} \sum_{i=1}^N \by_i \by_i^\top$ and $\mu$ is the smallest nonzero eigenvalue of $\bY$.
    We therefore have
    \begin{equation}
        \sum_{i=1}^N \langle \widehat{\bz},\bv_i \rangle^2 \le \frac{1}{\mu N} \sum_{i=1}^N \langle \widehat{\bz},\by_i \rangle^2.
    \end{equation}
    Also, viewing $\bY$ as a sample covariance matrix under a spiked matrix model, we have $\mu \to (\sqrt{\gamma}-1)^2 > 0$ in probability by Theorem~1.2 of~\cite{BS-2006-EigenvaluesSampleCovariance}.

    For fixed $\widehat{\bz}$, since $\by_i \sim \mathcal{N}(0,\bm I_n + \beta \bu \bu^\top)$ we have that $\langle \widehat{\bz},\by_i \rangle$ is Gaussian with mean zero and variance
    \begin{equation}
        \EE\langle \widehat{\bz},\by_i \rangle^2 = \EE[\widehat{\bz}^\top \by_i \by_i^\top \widehat{\bz}] = \widehat{\bz}^\top \EE[\by_i \by_i^\top] \widehat{\bz} = \widehat{\bz}^\top (\bm I_n + \beta \bu \bu^\top) \widehat{\bz} \to 1 + \beta(1-\rho)
    \end{equation}
    in probability, where we have used $\langle \bz,\bu \rangle \to \sqrt{1-\rho}$ and $\|\bz\| \to 1$.
    Therefore $\frac{1}{N} \sum_{i=1}^N \langle \widehat{\bz},\by_i \rangle^2 \to 1 + \beta(1-\rho)$ in probability as well.

    By the convergence of $\{\lambda_i\}$ to the semicircle law on $[-2,2]$ \cite{AGZ-2010-RandomMatrices}, we have $\lambda_{N+1} - \lambda_1 \le 5$ with high probability.
    Also, by choosing $\gamma > 1$ sufficiently close to $1$, we can ensure $\lambda_{N+1} \ge 2 - \varepsilon/2$ with high probability.

    Putting it all together, the value of $c(\bW)$ under $(\by_1, \dots, \by_N) \sim \PP$ with high probability satisfies
    \[ c(\bW) \ge \lambda_{N+1} - (\lambda_{N+1} - \lambda_1) \sum_{i=1}^N \langle \widehat{\bz},\bv_i \rangle^2 \ge 2-\varepsilon/2 - 5(\sqrt{\gamma}-1)^{-2}\big(1 + \beta(1-\rho)\big), \]
    which exceeds $2-\varepsilon$ provided we choose $\beta > -1$ close enough to $-1$ and $\rho > 0$ small enough.
\end{proof}

\section{Deceptive Finite-Size Effects}
\label{sec:finite-size}

Experiments for small $n$ computing $\SDP(\bW)$ suggest, contrary to our results, that $\SDP(\bW)$ is in fact very effective in bounding $\lambda^+(\bW)$ for $\bW \sim \GOE(n)$.
Not only do we observe for small $n$ the value $\SDP(\bW) \approx \sqrt{2}$, but we also find that $\SDP(\bW)$ appears to be \emph{tight}, the primal optimizer $\bX^{\star}$ often having rank one to numerical tolerances.
This is analogous to the efficacy of $\SDP(\bW)$ for recovering the spike for $\bW$ under the spiked non-negative PCA model discussed in \cite{MR-2015-NonNegative}.
We describe these results here, as well as further experiments suggesting what size of $n$ is required for these finite size effects to give way to the correct asymptotics.

The first experiment we consider solves $\SDP(\bW)$ for many random choices of $\bW$, obtains the optimizer $\bX^{\star}$, and considers the \emph{numerical rank} of $\bX^{\star}$.
We plot the results of 50 trials of this experiment with $n = 150$ in Figure~\ref{fig:primal-hist}, and observe that most trials have the second-largest eigenvalue of $\bX^{\star}$ of order at most $10^{-4}$ compared to the trace of 1, whereby $\bX^{\star}$ is nearly rank-one and the SDP is nearly tight.

The next experiment solves the following different SDP, which is dual to $\SDP(\bW)$, and which by a standard strong duality argument has the same value as $\SDP(\bW)$:
\begin{equation}
    \SDP^*(\bW) \colonequals \min_{\bY \geq 0} \lambda_{\max}(\bW + \bY) = \SDP(\bW).
\end{equation}
Having $\SDP(\bW) \leq 2 - \varepsilon$ therefore has the elegant interpretation of it being possible to ``compress'' the spectrum of $\bW \sim \GOE(n)$ below 2 by only increasing each entry.
We plot the results of 50 trials of this experiment with $n = 150$ in Figure~\ref{fig:dual-hist}.
(These semidefinite programs are solved using version 9.2 of the \texttt{Mosek} solver on a laptop computer with 32GB RAM and an Intel i7-1065G7 processor; the average time to solve an instance is 18.1 minutes.)
From these results, this compression indeed appears possible; moreover, the compressed spectrum appears to have an interesting ``wall shape'' not unlike that of the GOE conditioned on its largest eigenvalue being small; see, e.g., Figure~4 of \cite{MS-2014-LargeDeviationsTopEigenvalue}.

The final experiment seeks to identify how large we would need to make $n$ in order to observe that the above are all illusory finite-size effects.
To do this, we consider our primal witness from the proof of Theorem~\ref{thm:sdp},
\begin{equation}
    \label{eq:Xad-def}
    \bX^{(\alpha, \delta)} = \frac{1 - \alpha}{\delta n}\bm P + \frac{\alpha}{n} \one_n\one_n^{\top}
\end{equation}
for $\bP$ the projection matrix to the top $\delta n$ eigenvectors of $\bW \sim \GOE(n)$ and $\alpha$ large enough that $\bX^{(\alpha, \delta)} \geq 0$ entrywise.
In Figure~\ref{fig:lower-bound}, we fix $\delta = 1/25 = 0.04$, and plot both the smallest $\alpha$ making $\bX^{(\alpha, \delta)}$ feasible for the SDP and the corresponding lower bound $\langle \bX^{(\alpha, \delta)}, \bW \rangle$ on the SDP.
We see that the smallest $\alpha$ only decays to zero very slowly, as $\widetilde{O}(n^{-1/2})$ per our argument.
Accordingly, $\langle \bX^{(\alpha, \delta)}, \bW \rangle$ also only very slowly approaches its limiting value.
Moreover, even to have $\langle \bX^{(\alpha, \delta)}, \bW \rangle > \sqrt{2}$ requires $n \sim 10^4$, suggesting that this is roughly the size of $n$ required to observe that $\SDP(\bW)$ in fact is not typically tight (in stark contrast to $n \sim 10^2$ that is tractable to solve on commodity hardware).

Taken together, these experiments present a striking caution against extrapolating asymptotic behavior for an SDP from experimental results tractably computable in reasonable time in practice.
In our case, the correct asymptotic behavior ``kicks in'' only for problems two orders of magnitude larger than the largest tractable with off-the-shelf software on a personal computer.

\section*{Acknowledgements}
\addcontentsline{toc}{section}{Acknowledgements}

We thank Yash Deshpande, Sam Hopkins, and Andrea Montanari for helpful discussions.

\newpage

\begin{figure}[!h]
    \centering
    \includegraphics[scale=0.85]{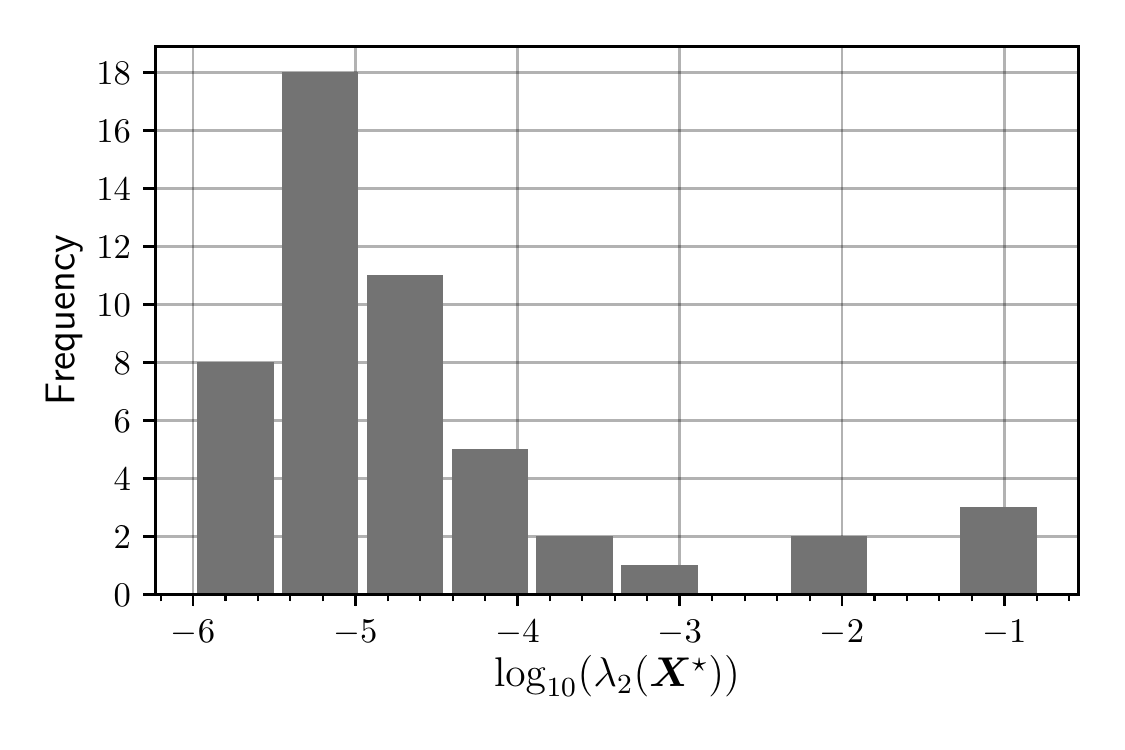}
    \vspace{-1.5em}
    \caption{\textbf{SDP primal spectrum for small $n$.} We plot a histogram of the second-largest eigenvalue of $\bX^{\star}$ the optimizer of $\SDP(\bW)$, over 50 trials with $n = 150$.
    }
    \label{fig:primal-hist}
\end{figure}

\begin{figure}[!h]
    \centering
    \includegraphics[scale=0.85]{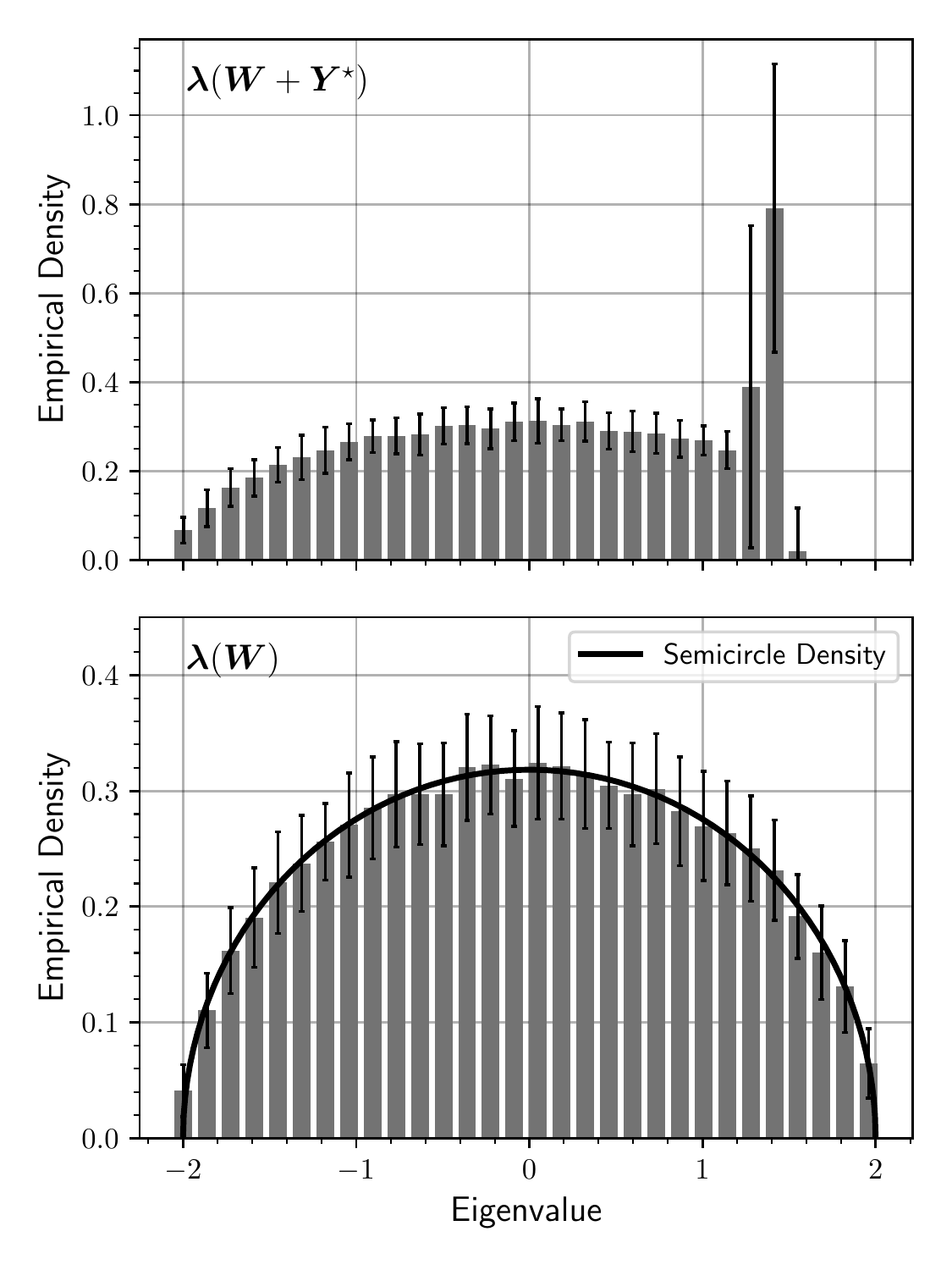}
    \vspace{-1.5em}
    \caption{\textbf{SDP dual spectrum for small $n$.} We plot the means of histograms, with error bars of one standard deviation per bin, for the spectra of $\bW$ and $\bW + \bY^{\star}$ for $\bY^{\star}$ the optimizer of $\SDP^*(\bW)$, over 50 trials with $n = 150$.
    }
    \label{fig:dual-hist}
\end{figure}

\begin{figure}
    \centering
    \includegraphics[scale=0.85]{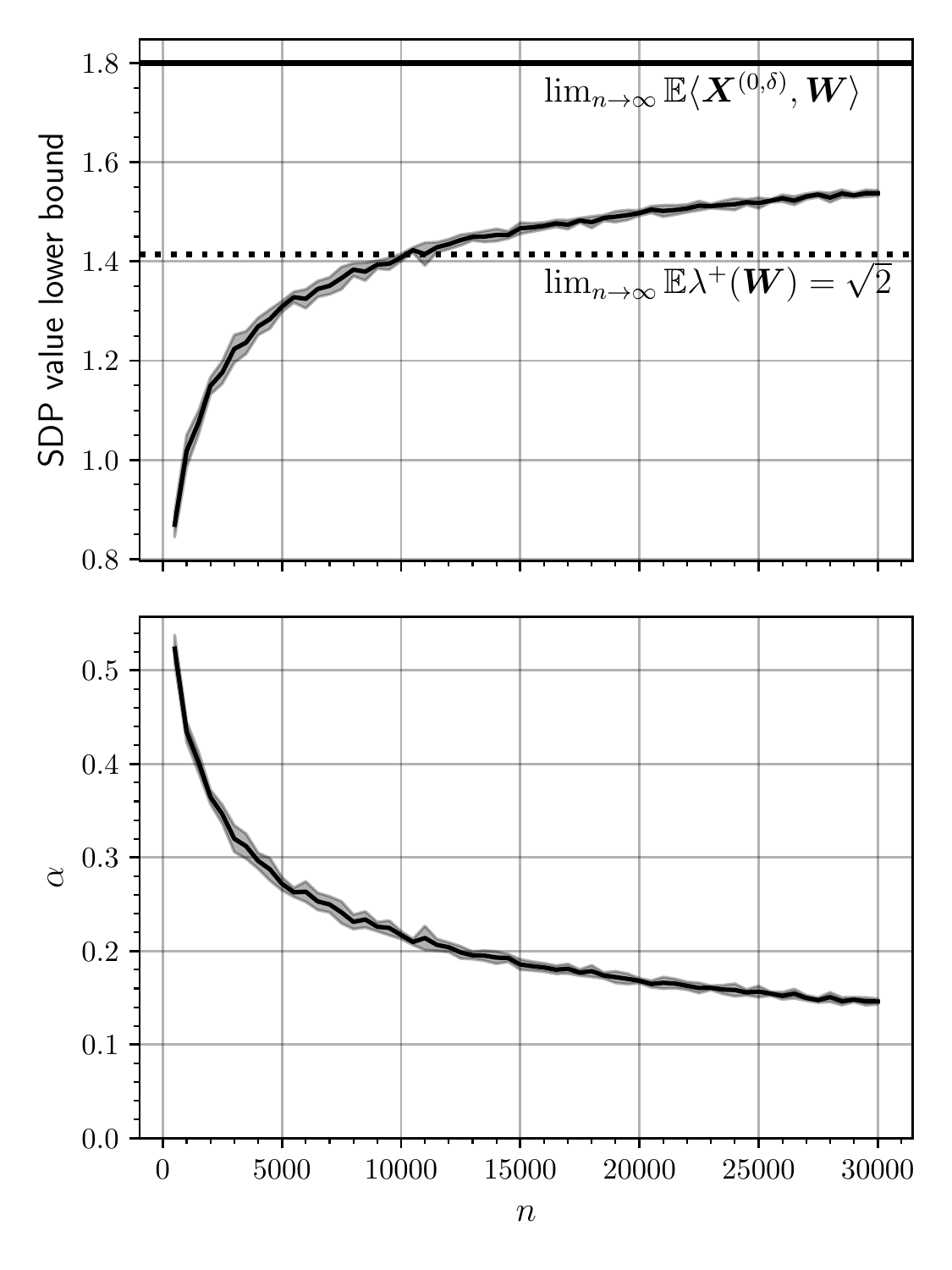}
    \vspace{-1.5em}
    \caption{\textbf{SDP lower bound convergence.} We fix $\delta = 1/25 = 0.04$, and given $\bW$ compute the smallest $\alpha$ for such that $\bX^{(\alpha, \delta)}$, as defined in \eqref{eq:Xad-def}, is feasible for $\SDP(\bW)$. In the upper graph, for a range of values of $n$, we plot the mean and an error interval of one standard deviation of 10 values of $\langle \bX^{(\alpha, \delta)}, \bW \rangle$, a lower bound on $\SDP(\bW)$. We note that this clearly exceeds $\lim_{n \to \infty} \EE \lambda^+(\bW) \approx \sqrt{2}$ once $n \gtrsim 10^4$; however, it only very slowly approaches its expected limiting value. In the bottom graph, we likewise plot the mean and standard deviation of the minimum valid value of $\alpha$, again observing that it only very slowly approaches its limiting value of zero (which may be verified to match the rate our theoretical calculation predicts of $\alpha = O(n^{-1/2})$ up to logarithmic factors).}
    \label{fig:lower-bound}
\end{figure}

\newpage

\bibliographystyle{alpha}
\addcontentsline{toc}{section}{References}
\bibliography{main}

\end{document}